
\documentclass[10pt,conference]{IEEEtran}
\IEEEoverridecommandlockouts
\usepackage{amsmath}
\usepackage{amsthm}
\usepackage{amssymb}
\usepackage{setspace}
\usepackage{authblk}
\usepackage[noend,boxed]{algorithm2e}
\usepackage[table]{xcolor}
\usepackage{graphicx}

\pagestyle{plain}
\newtheorem{theorem}{Theorem}[section]
\newtheorem{lemma}{Lemma}[section]
\newtheorem{proposition}{Proposition}[section]

\theoremstyle{definition}

\begin{document}
\title{A Secure Distributed Authentication scheme based on CRT-VSS and Trusted Computing in MANET}
%\title{Secure Distributed Authentication scheme using CRT-VSS and Trusted Computing in MANET}

\author[]{Qiwei Lu}
\author[]{Wenchao Huang}
\author[]{Xudong Gong}
\author[]{Xingfu Wang}
\author[]{Yan Xiong}
\author[]{Fuyou Miao}

\affil[]{ University of Science and Technology of China, Hefei, China
\authorcr \{Wangxfu, yxiong, huangwc, mfy\}@ustc.edu.cn  \authorcr \{luqiwei, lzgxd\}@mail.ustc.edu.cn }
\maketitle

\begin{abstract}

With the rapid development of MANET,  secure and practical authentication  is becoming  increasingly  important. The existing works perform the research from two aspects, i.e., (a)secure key division and distributed storage, (b)secure distributed authentication. But there still exist several unsolved problems. Specifically, it may suffer from  cheating problems and fault authentication attack, which can result in authentication failure and DoS attack towards authentication service. Besides, most existing schemes are not with satisfactory efficiency  due to exponential arithmetic based on Shamir's scheme.
In this paper,  we explore the property of verifiable secret sharing(VSS) schemes with Chinese Remainder Theorem (CRT), then propose a secret key distributed storage scheme based on CRT-VSS  and trusted computing for MANET. Specifically, we utilize trusted computing technology to  solve two existing cheating problems in secret sharing  area before. After that,  we do the analysis of  homomorphism property with CRT-VSS and  design the corresponding shares-product sharing scheme  with better concision.
On such basis, a secure distributed Elliptic Curve-Digital Signature Standard signature (ECC-DSS) authentication scheme based on CRT-VSS scheme and trusted computing is proposed.
 Furthermore, as an important property of authentication scheme, we discuss the refreshing property of CRT-VSS  and do thorough comparisons with Shamir's scheme.
Finally, we provide formal guarantees towards our schemes proposed in this paper.

\end{abstract}

\begin{IEEEkeywords}
secret sharing, trusted computing, Chinese Remainder Theorem, distributed authentication, CSP security model
\end{IEEEkeywords}

%\IEEEpeerreviewmaketitle
%\enlargethispage{4\baselineskip}
\section{Introduction}
\label{Introduction}
Mobile ad hoc network(MANET) is a new kind of wireless mobile network, where  nodes are dispersed and easy to be compromised. So it isn't reasonable to assume such a trust center in MANET. This makes the central management of secret keys and  traditional  authentication scheme with single CA weak and dangerous. In order to solve such problem and build dependable schemes, most of the existing work contributes to the design of  secure distributed authentication scheme in such two steps as follows,
 \begin{itemize}
   \item  divide the secret key in some way (e.g., secret sharing scheme) and distribute them to different nodes.
   \item  design the detailed secure and dependable authentication process, whose essence is the reconstruction of  original secret key and generation of  authentication signature.
 \end{itemize}

Specifically, in the first step,  the private key of CA will be divided in shares by threshold cryptography scheme,e.g. $(t,n)$ threshold Shamir's scheme\cite{Shamir}. Then the shares will be stored into $n$ different nodes and each node has one.
As for the  second step, distributed  authentication, many schemes are proposed to provide the service of a distributed CA.
%Due to the distributed property of authentication process, it should be executed in a secure and dependable way in case of various unknown attacks.
And there have been several corresponding schemes, e.g. the multi-hop authentication scheme  with  RSA function in \cite{xiongyan}, ElGamal shape function scheme in \cite{Kaya08} and  threshold DSS (Digital Signature Standard) scheme in \cite{gennaro1996robust}.

However, there still exist several problems. The threshold secret sharing schemes all have two categories  of cheating problems against secret distributor and participants respectively. And existing authentication solutions can't eliminate the possibility of cheating (just detect it with some computation overhead). Most of existing distributed  authentication schemes, e.g., \cite{gennaro1996robust},\cite{xiongyan},  lack  the validation of the secret shares  and trusted property of the nodes, which will result in fault attack problem \cite{faultattack2010}. Furthermore, the misbehavior of authentication nodes may lead to DoS towards the authentication service. What's worse, most of existing works based on Shamir's scheme\cite{Shamir} suffer from  a relative high complexity of exponent arithmetic in secret reconstruction phase.
%Then upper layer authentication scheme will suffer low efficiency performance and  prone to  DoS result in attack environment.  Some discussion on the limitation and restriction of the existing schemes can  be seen in   \cite{ertaul2005security},\cite{xiongyan}, etc.

Due to the reasons above, we aim to contribute as below:

\begin{itemize}
  \item  First, we propose a new scheme  based on CRT-VSS (Verifiable Secret Sharing Scheme Based on  Chinese Remainder Theorem) scheme instead of Shamir's scheme, tackling  security problems with better efficiency. Specifically, we utilize the trusted computing technology to solve two cheating problems above.

  \item  Then  we explore $(\oplus,\oplus)$, $(\otimes,\otimes)$ homomorphism property for  CRT-VSS scheme  and  design the secure shares-product sharing scheme with CRT scheme. It owns better concision and equal security property compared with Shamir's scheme.

  \item On such basis, a trusted ECC-DSS distributed authentication scheme is proposed to eliminate the possibility of the DoS and fault attack problems  before.
       Furthermore, we discuss the refreshing property of CRT-VSS scheme and do thorough comparisons with Shamir's scheme  as an important property of authentication scheme towards mobile adversaries \cite{LidongZhou} tolerance
 \item       Finally, we  model our schemes and ensure the security with formal guarantees.

\end{itemize}

The rest of this paper is organized as follows. In Section~\ref{related  work}, we introduce related works  in this paper. In Section~\ref{simple authentication}, we design and analyze the framework of authentication  between two nodes(as initiator and responder) and prepare it as  a module which will be used in detailed scheme below. Section~\ref{CRT-VSS} shows the secret key distributed storage scheme based on CRT-VSS  and trusted computing.  In section~\ref{homomorphic property} we explore and do some analysis for the homomorphism property of the CRT-VSS scheme. Then we propose our trusted ECC-DSS distributed authentication scheme in section~\ref{ECC-DSS authentication}.
 %In Section~\ref{refreshing scheme}, we discuss the refreshing scheme for CRT-VSS scheme and do thorough comparisons with Shamir's scheme.
  In Section~\ref{security analysis}, we  present security analysis on the schemes in this paper. Finally, Section~\ref{Conclusion} concludes the paper. % and mentions some future work.

\section{related  work}
\label{related  work}

\subsection{ threshold secret sharing scheme}
In traditional $(t,n)$ threshold secret sharing schemes, the secret distributor $D$ divides the secret $S$ into $n$ shares and deliver them to $n$ participants securely. The  recovery of secret $S$ can  only be accomplished by a coalition  not less than $t$ participants. The  most popular secret sharing scheme is  Shamir's scheme in \cite{Shamir}. The reconstruction of  secret can be realized  with complexity $O(tlog^2t)$ \cite{Shamir}.
The first secret sharing scheme based on the Chinese Remainder Theorem(CRT) is proposed in \cite{CRT} with reconstruction complexity $O(t)$.
There exist two kinds of cheating problems in secret sharing area. (1) cheating from distributor. Distributor delivers the false secret shares to corresponding authentication nodes, braking the base of authentication. (2) cheating among  participants. During the reconstruction phase, malicious participant will provide false/fault shares, leading to unfair, inconsistent results, etc.

The verifiable secret scheme development based on zero-knowledge proof evidence has both improvements on the original Shamir's and CRT scheme(e.g. \cite{Iftene},\cite{lixiong}), which can only be used to detect such cheating problems but without elimination.  The detection  scheme in \cite{harn2009detection} can achieve the goal in some specified condition, but inefficient and restricted to specified cheater number condition.  Though work in \cite{CRTRange} uses the range proof techniques to prove the right range of the  generated parameters, it can't provide further assurance or evidence about the right property of the node.

\subsection{ homomorphism property and distributed authentication}
In order to tolerate mobile adversaries, the key management service proposed in \cite{LidongZhou} employs share refreshing as an important scheme which relying on the $(\oplus,\oplus)$ homomorphism property. It is indicated in \cite{homomorphisms} that the traditional Shamir's threshold scheme only has the $(\oplus,\oplus)$-homomorphism property, without the $(\otimes,\otimes)$ or $(\oplus,\otimes)$- property. While it's only mentioned in \cite{homomorphisms},\cite{Iftene} that the extended Asmuth-Bloom scheme has homomorphism proprieties but without any proof about it. The discussion about  homomorphism property of  CRT-VSS scheme is rare.

In the Distributed CA scheme in \cite{LidongZhou}, the private key of CA is divided into several shares and distributed to several chosen entities, and each has one and only one part. Distributed authentication is the result of the Distributed CA scheme in the distributed environment of MANET. The scheme proposed in \cite{xiongyan} realize the authentication with the process of the multi-hop signing. The encrypted  functions mechanism\cite{Encryptedfunction} is used to strengthen the privacy and security of the authentication process, similar related work can be seen in \cite{ertaul2005security}. In \cite{gennaro1996robust}, a meaningful robust DSS signature schemes are proposed under  different security requirements.
But in all the schemes proposed above, the choice of nodes coalition for authentication is random and  lacks  validation, including the secret shares and the trusted property of the nodes. This may lead to DoS  and fault attack problem  \cite{faultattack2010} . Besides, RSA/Elgamal  encrypt/decrypt  and signature mechanism  utilized by all the schemes above has a relative limited efficiency in MANET according to analysis in \cite{ertaul2005security}.

\subsection{trusted computing}
%\subsubsection{CRT-VSS scheme}
Trusted Computing(TC) is a new technology developed and promoted by the Trusted Computing Group(TCG) for enhancing the security of computers and networks. The root of trust is a tamper resistant hardware engine, called TPM. TPM is assumed robust against both hardware and software attacks from either the underlying host or external sources. The aim of TC is to allow someone else to verify that only authorized code runs on a system. There are at least 16 PCRs in a TPM and a PCR is a 160-bit shielded storage location to hold an unlimited number of measurements in the way like this: $PCR_{new}$=SHA-1$(PCR_{old}\mid \mid measurement)$. A simple way to accomplish the measurement mentioned here is to do a digest of the configuration of the platform to ensure the integrity of a Platform.
Direct anonymous attestation (DAA) protocol \cite{DAA} was adopted by TCG as the method for remote authentication of the hardware module TPM, while preserving the privacy of the user of the platform that contains the module.  Furthermore, the Property-Based Attestation(PBA) scheme proposed in \cite{PBA} present a new privacy-preserving approach to attest the right configuration of host.
Now the application of TC technology appears in various areas, e.g., inference Control \cite{IC}, Ad hoc routing protocol security \cite{DAAODV}, digital rights management, etc.

\section{framework of authentication protocol based on trusted computing between two nodes}
\label{simple authentication}
In the two important steps of a secure authentication scheme, i.e. secure secret division/distribution and secure distributed authentication, the right behavior of the nodes is so important that it should be ensured in some reliable authentication way. Otherwise the malicious action of nodes may lead to cheating problems and DoS/fault attack.  In order to solve this problem, we will utilize the trusted computing technology, including the property of TPM, the secure and authentication property of trusted computing protocols, e.g. DAA\cite{DAA} and PBA\cite{PBA} protocol. Because the authentication between two nodes will be referred and used repeatedly in the detailed schemes below, here we will first illustrate the framework of the authentication protocol between two nodes, one initiator and responder, in the MANET. Then the other part of this paper can refer it as a module easily. Without loss of generality, suppose the authentication process happens between node $P_i$ and $P_j$, node $P_i$  request for authentication from  node $P_j$. The  authentication process between them is as follows in Figure~\ref{fig:frameworkAUTH}.
\begin{figure}[h]
 \centering
 \fbox{
\begin{minipage}{3.2 in}
1. $P_i$ $\rightarrow$ $P_j$: $id_i$,$id_j$,INIT\\
2. $P_j$ $\rightarrow$ $P_i$: $id_j$,$id_i$,$n_1$\\
3. $P_i$ $\rightarrow$ $P_j$: $id_i$,$id_j$,
                            $im$=SHA-1$(id_i||n_1||PCR)$,\par
\hspace{6ex}                             $DS_i=DAASign(im)$,$PS_i=PBASign(im)$,\par
\hspace{6ex}                             SHA-1$(id_i||$K\_INFO\_I$||im||DS_i||PS_i)$,K\_INFO\_I \par
4. $P_j$ $\rightarrow$ $P_i$: $id_j$, $id_i$,$S_{sk_{AIK_J}}$(K\_INFO\_J$||id_j$),$\{id_j||n_{2}\}_{k_{ij}}$ \par
5. $P_i$ $\rightarrow$ $P_j$: $id_i$,$id_j$,$\{id_i||n_2\}_{k_{ij}}$ \par
\end{minipage}
}
 \caption{$P_i$ request for authentication from node $P_j$ }
 \label{fig:frameworkAUTH}
\end{figure}

%In our previous work, similar authentication process can be seen in \cite{DAAODV}, where  authentication process is designed to enhance the security of the route protocol against the attacks, e.g. tunnel attack, vertex attack  and malicious attacks. And similar work can be seen in \cite{IC}, where the authentication happens between the ICM and ACM party to ensure the trusted authentication happens before the response of the query and IC\_POLICY execution. But the concrete condition and purpose of the authentication process and the details are different.

In the framework, each node maintains three lists:
 \begin{itemize}
   \item Neighbor list $N$. records the IDs of the neighbor nodes.
   \item  Trust list $T$.  records the authentication result of different neighbors with different symbols, e.g. 1 represents trusted, 0 for no authentication before and -1 for authentication failure, initially all 0.
   \item Key list $K$. stores the established session with trusted neighbor nodes, initially empty.
 \end{itemize}

Now we will do some explanation to the details of the framework proposed above as follows.

 \fbox{
\begin{minipage}{3.2 in}
1. $P_i$ $\rightarrow$ $P_j$: $id_i$,$id_j$,INIT
\end{minipage}
}

Node $P_i$ sends the request for authentication towards node $P_j$ with its ID $id_i$ and destination ID $id_j$.

 \fbox{
\begin{minipage}{3.2 in}
2. $P_j$ $\rightarrow$ $P_i$: $id_j$,$id_i$,$n_1$
\end{minipage}
}

After receiving the request from node $P_i$, $P_j$ first check the  ID $id_i$ in neighbor list $N$, if there is no match then drop the request without any reply, else node $P_j$ reply with  its ID $id_j$ and a random nonce $n_1$ as the challenge.

\fbox{
\begin{minipage}{3.2 in}
3. $P_i$ $\rightarrow$ $P_j$: $id_i$,$id_j$,
                            $im$=SHA-1$(id_i||n_1||PCR)$,\par
\hspace{6ex}                             $DS_i=DAASign(im)$,$PS_i=PBASign(im)$,\par
\hspace{6ex}                             SHA-1$(id_i||$K\_INFO\_I$||im||DS_i||PS_i)$,K\_INFO\_I \par
\end{minipage}
}

After receiving the $n_1$ from $P_j$, node $P_i$ will do the following work.
\begin{itemize}
  \item Compute the measurement of integrity towards the platform by $im$=SHA-1$(id_i||n_1||PCR)$. Here we note that the measurement of the integrity consists of the content of PCR. Due to the TPM will record the measurement of the paltform from time to time by the function and update of PCRs. Here we use the command TPM\_Quote to get the content of the specified PCR. And the PCR can be updated by the command TPM\_Extend.
  \item Generate the DAA signature to attest to $P_j$ its perfect TPM. Here the the integrity measurement $im$ computed above will be the  parameter,$DS_i=DAASign(im)$.
  \item Generate the PBA signature to furthermore attest to $P_j$ that its configuration and property is in  a good set but without revealing the details. It is assumed that before running the PBA protocol, node $P_i$ and $P_j$ have already agreed on a set of configuration values denoted $ CS = \{cs_1,\dots,cs_n \}$.
            \footnote{We notice that, the configuration values in $CS$ may vary from node to node and time to time, to synchronize with updates and to cover all nodes' configuration is important but not the focus of this paper, we assume such an agreement by some specific  protocol. }

The generation denoted as $PS_i=PBASign(CS,im)$. Note that $CS$ is constructed and planted inside the nodes initially and will be updated consistently if necessary according to the technology of trust computing.
  \item If there is no key established with node $P_j$, $K[id_j]$ is empty, node $P_i$ will generate its public part parameters according to Diffie-Hellman key exchange scheme, denoted as  K\_INFO\_I=($g$,$p$,$g^x\ mod \ p$) and  provide the integrity assurance by SHA-1$(id_i||$K\_INFO\_I$||im||DS_i||PS_i)$, otherwise  the digest assurance is SHA-1$(id_i||im||DS_i||PS_i)$
\end{itemize}

\fbox{
\begin{minipage}{3.2 in}
4. $P_j$ $\rightarrow$ $P_i$: $id_j$, $id_i$,$S_{sk_{AIK_J}}$(K\_INFO\_J$||id_j$),$\{id_j||n_{2}\}_{k_{ij}}$
\end{minipage}
}

After receiving the signature attest message from $P_i$, node $P_j$ will do the following work.
\begin{itemize}
  \item Check the integrity of the message by digest assurance above. If check fails, then drop the message and abort.
  \item Validate the  DAA and PBA signature by $DAAVerify(DS_I)$,$PBAVerify(CS,im)$.
  If any verification fails, abort the protocol and mark the trusted  symbol  by $T[id_i]=-1$, otherwise $T[id_i]=1$.
  \item  Like above, if there is no key established before, then generate its public part parameters  as  K\_INFO\_J=($g$,$p$,$g^y\ mod \ p$). Furthermore sign the K\_INFO\_J structure with its ID under the private key of AIK inside the TPM to attest the right origin, in order to avoid the possibility of man-in-the-middle attack. Then the session key between the nodes can be generated  according to DH scheme, denoted as $k_{ij}=(g^x)^y \ mod\ p$. Then node $P_j$ stores the session key $k_{ij}$ in the key list as $K[id_i]=k_{ij}$  and  generate the encryption part $\{id_j||n_2\}_{k_{ij}}$ with its ID $id_j$ and random nonce $n_2$ inside.
\end{itemize}

\fbox{
\begin{minipage}{3.2 in}
5. $P_i$ $\rightarrow$ $P_j$: $id_i$,$id_j$,$\{id_i||n_2\}_{k_{ij}}$ \par
\end{minipage}
}

\begin{itemize}
\item $P_i$ first validate the signature with the public AIK of $P_j$ by $S_{pk_{AIK_J}}(S_{sk_{AIK_J}}$(K\_INFO\_J$||id_j$)) $\stackrel{?}{=}$ K\_INFO\_J'$||id_j$. If it fails, abort and  without any reply.
 \item Similarly, if there is no key established before, $P_i$ first generate the session key $k_{ij}$ with the K\_INFO\_J and its own part K\_INFO\_I according to DH scheme denoted as $k_{ij}=(g^y)^x \ mod\ p$. Then stores the key $k_{ij}$ in its key list.
 \item Then node $P_i$ decrypt the encrypted message $\{\{id_j||n_2\}_{k_{ij}}\}_{k_{ij}^{-1}}$ and get the nonce $n_2$.
 \item At last node $P_i$ generates the encryption part $\{id_i||n_2\}_{k_{ij}}$ with its ID $id_i$ and nonce $n_2$ above together to inform neighbor  node $P_j$ that the key $k_{ij}$ has been  well established between them.
\end{itemize}

Some additional explanation about the details of the scheme above is provided as follows.
\begin{enumerate}
  \item The Neighbor list is constructed by routing protocols, e.g., OLSR, AODV. Besides, it and can consist of the nodes not neighbored stringently, e.g., the two-hop neighbors in OLSR, in order to contain enough neighbors( not less  than $n$).
  \item The update and change of Neighbor list $N$ will result in the update of the trust list $T$ and key list $K$. Once neighbor node moves out of the region, the item will be removed out from the neighbor list and that will lead to corresponding item abolishment in the trust list and key list respectively.
  \item In order to enhance the security, the trust list $T$ and key list $K$ should be re-initialized periodically. So the authentication process should be executed periodically when necessary.
\end{enumerate}

In the other parts of the paper, we will refer this authentication process as a module denoted as $auth(id_i,id_j)$ indicating node $P_i$ running authentication process with node $P_j$ and  request for the authentication from the later. The security analysis about the protocol will be left in section~\ref{security analysis}.

\section{ secret key distributed storage scheme based on CRT-VSS and  trusted computing}
\label{CRT-VSS}
%The secret key distributed storage scheme based on  CRT-VSS and  trusted computing  proposed below  is based on the Iftene's  scheme in \cite{Iftene}. As mentioned above, Iftene's scheme in \cite{Iftene} is verifiable and it can detect the cheating of the participants. But it cannot eliminate the unfair result and the cheating of the distributor.

%Before we show the details of our scheme, we summarized some notations used in the paper in Table~\ref{tab:crtnotations}.
%
%\begin{table}
%\centering
%\caption{Notations }
%\label{tab:crtnotations}
%\begin{tabular}{|l|l|}
%\hline
%Notation & Explanation  \\
%\hline
%$S$ & the secret key of CA.\\
%$D$ & the distributor of the secret S. \\
%$n$ & the number of the shares.\\% distributed in nodes \\
%$t$ & the threshold or the minimum number of \\
% &  participants to reconstruct the secret S.\\
%$P_i:1 \leq i \leq n$ & the participants of the reconstruction. \\
%$p_i:1 \leq i \leq n$ & n safe primes.  \\
%$P$ & the result of $\prod_{i=1}^{n}p_i$.\\
%$y$ & $d+Am_0$,where A is a random number.\\
%$y_i:1 \leq i \leq n$ &  y mod $m_i$,the share of node i.\\
%$\mathbb{P}$ & the set of size $n$ participants.\\
%
%$\mathbb{C}$ & a coalition of size $t$ participants.\\
%$M_C$ & the modulus of coalition $C$, $\prod_{i \in C}{m_i}$\\
%$M$ & the product of $m_i$, $\prod_{1 \leq i \leq n }{m_i}$,\\
% & also as the domain of the $y \in Z_M$.\\
%\hline
%\end{tabular}
%\end{table}

 Our scheme has two phases. One is  the distributing phase in which distributor divides the secret $S$ into $n$ shares(denoted as $\{S_i\}_{i \in \{1,\dots,n\}}$) and distributes them to a set of $n$ nodes as participants(denoted as set $\mathbb{P}$). The other is reconstruction phase, any coalition of $t$ nodes(described  as $\mathbb{C}$) can reconstruct the secret $S$ together. The details can be described as follows.

\subsection{Distribution phase }
 To share a secret $S$ among a set of $n$ nodes with verifiable shares, the distributor does the following:\\
1) Setup work. Distributor $D$ setup and set the related parameters, including the big prime number $m_0(m_0>S)$, the positive sequence $m=\{m_1,m_2,\dots,m_n\}$ meets the requirements below:
     \begin{itemize}
     \item the sequence $m=\{m_1,m_2,\dots,m_n\}$ is increasing;
     \item $(m_i,m_j)=1,(i \neq j);$
     \item $(m_i,m_0)=1,(i=1,2,\dots,n)$;
     \item $N=\prod_{i=1}^{t}m_i > m_0^{2}\prod_{i=1}^{t-1}m_{n-i+1};$
     \item ensure the $p_i=2m_i+1$ is still prime.
     \end{itemize}
     After that  distributor need to choose a random positive integer $A$ satisfied the condition $0<A<M$,and $y=S+Am_0<M$, here the $M=\prod_{i=1}^{t}m_i$. Then $y$ is the real shape of the secret $S$ to be divided later.\\
2) The distributor $D$ choose $n$ trusted neighbors nodes $\{P_i\}_{i \in \{1,\dots,n\}}$(denoted as set $\mathbb{P}$) to be qualified participants to distribute the shares later.
     \begin{itemize}
       \item D $\stackrel{B}\rightarrow$ $\{\mathbb{N_D}\}$: $id_D$,AUTH,$S_{sk_{AIK}}(id_D)$. $\{\mathbb{N_D}\}$ is the neighbor set of D.
       The distributor $D$ broadcasts(denoted as $\stackrel{B}\rightarrow$ notation) a AUTH message to inform its neighbors the authentication election event. The message consists of  the ID of D, authentication symbol AUTH and the signature of the $id_D$ by the private key of the AIK inside the TPM of distributor D to attest to $P_i \in \{\mathbb{N_D}\}$ the identify of D.
       \item $P_i$ $\rightarrow$ D: $id_i$,$id_D$,$auth(id_i,id_D)$.\par
       After receiving the AUTH message, $P_i$ first check the signature with the public AIK of D by $S_{pk_{AIK_D}}(S_{sk_{AIK_D}}(id_i)) \stackrel{?}{=} id_i$. If it holds, then $P_i$ begins to execute the authentication process as $auth(id_i,id_D)$, else $P_i$ drop the AUTH message without any reply.
       \item After the authentication process, Distributor D choose $n$ qualified nodes from the trust list $T$ as participants.
     \end{itemize}
%The distributor first using the DAA scheme\cite{DAA} and PBA challenge protocol\cite{PBA} to check the trusted property and right configuration of the nodes, the DAA scheme can be used to ensure the existence of a trusted and not compromise TPM in the node, the PBA challenge protocol is used to ensure the trusted property of the chosen nodes.
3)  Reversely, the distributor D executes the authentication process to attest to the  selected participants $\{P_i\}_{i \in \{1,\dots,n\}}$.
 \begin{itemize}
   \item D $\rightarrow$ $\{P_i\}$($P_i \in \mathbb{P}$): $auth(id_D,id_i)$. \par
   \item $P_i$ $\rightarrow$ D: $id_i$,$id_D$, AGREE.\par
   If $P_i$ pass the authentication process from D and agree to undertake the responsibility as a participant, $P_i$ reply with a AGREE message to inform the distributor D, else deny and refuse the request without reply.
 \end{itemize}

 If D doesn't receive the AGREE message from $P_i$, D have to remove the id of the $P_i$ from the participant set $\mathbb{P}$. This step won't  finish until the participant set has $n$ participants, equally receiving $n$ AGREE messages totally.\\
% pant also use DAA and PBA scheme to validate the configuration of $D$ in order to ensure the property of $D$ is trusted to avoid the first category cheating problem mentioned in section II.
4)  The distributor $D$ generates and distributs shares and validation information securely.
 \begin{itemize}
   \item D $\rightarrow$ $\{P_i\}$($P_i \in \mathbb{P}$): $id_D$,$id_i$, $\{S_i\}_{k_{iD}}$. \par
   Let $g_i \in Z_{p_i}$ be an element of order $m_i$. D computes the shares by $S_i=y (mod \ m_i)$ and encrypts $S_i$ by the session key established in the authentication process before. This will keep the secret of the shares.
   \item D $\stackrel{B}\rightarrow$ $\{P_i\}$: $id_D$,PUBLIC,$S_{sk_{AIK_D}}(id_D||p_i||g_i||m_i)$. \par
  Then distributor generates the validation information by $ z_i = g_i^{S_i} (mod \ p_i)$ and  makes the values $p_i$,$ g_i$,$ z_i$ public by the broadcast PUBLIC message. In order to ensure the correctness of  the origin of the public information, the signature with private AIK $sk_{AIK_D}$ is included in the message.
 \end{itemize}

After receiving the PUBLIC message, the participants can verify the correctness of the origin of  public information by the signature. With the correct validation information, participant $P_i$ can find whether his share is valid or not by checking $ z_i = g_i^{S_i} (mod \ p_i)$.

The work and schemes proposed before all assume that $D$ is honest. But once the malicious distributor $D$ distributes the shares for some $y>M$, then $y$ will be greater than $M_C$ for all coalition $C$ of size $t$. Hence, according to the analysis in \cite{CRTRange} , $C$ may not compute the correct $y$ value or recover correct secret $S$ with inconsistent results. But here in the scheme, with the authentication help of the trusted computing technology, the first cheating problem exists in this phase before can be eliminated.

\subsection{reconstruction  phase }
As can be seen in the notations Table~I, without loss of generality, suppose
$C={\{P_{i_k}\}}_{k \in \{1,\dots,t\}}$,$M_C=\prod_{i\in C}{m_i} $, $M_{C \backslash \{i_k\}}=\prod_{j\in C,j \neq i_k}{m_j} $, $M_{C,i_k}^{'}\equiv M_{C \backslash \{i_k\}}^{-1}(mod \ m_{i_k})$, $M_{C,i_k}^{'}M_{C \backslash \{i_k\}}^{-1}\equiv 1(mod \ m_{i_k})$,  the exchange of shares and reconstruction happens between the participants in the coalition $C$. $P_{i_k},P_{i_j} \in C$. The details are as follows.\\
1)$P_{i_j}$ should firstly ensure that the other  participants, denoted as  $P_{i_k}$$(i_k \neq i_j)$,in the coalition $C$ are all trusted.
\begin{itemize}
  \item If $T[i_k]=1$ and $K[i_k]$ not empty, then $P_{i_k}$ is trusted and goto the next step.
  \item If $T[i_k]=-1$, then the authentication from $P_{i_k}$  before failed, $P_{i_j}$ abort and the coalition should be rebuilt.
   \item If $T[i_k]=0$, no authentication before. $P_{i_j}$ asks for new $auth(id_{i_k},id_{i_j})$ process for the $P_{i_k}$ and deal with the result as before. If the authentication succeeds, goto the next step.
\end{itemize}
2) After the insurance towards the participants in coalition $C$, $P_{i_j}$ delivers its share $S_{i_j}$ to other participant securely.\par
$P_{i_j}$ $\rightarrow$ $P_{i_k}$$(i_k \neq i_j)$: $id_{i_j}$,$id_{i_k}$,$\{S_i\}_{k_{i_ji_k}}$.

 The other participants can decrypt the message with the session key $k_{i_ji_k}$, then  validate the shares from other nodes, e.g. The share $S_{i_j}$ can be verified by the other participants  in $C$ by the equation
    $ z_{i_j} \stackrel{?}{=} g_{i_j}^{S_{i_j}} (mod \ p_{i_j})$.\\
3) Similarly, $P_{i_j}$  can receive the encrypted shares from the other $(t-1)$ participants, after the decryption and validation success, $P_{i_j}$  can get the equation set established as below:
$$  S_{i_j}\equiv S (mod \ m_{i_j}), (i=1,2,\dots,t)$$
% \begin{equation*}
%\left\{
%\begin{array}{c@{\;\;}l}
%y \equiv S_{i_1}(mod \ m_{i_1})\\
%y \equiv S_{i_2}(mod \ m_{i_2})\\
%\dots \dots \dots\dots \dots \\
%y \equiv S_{i_t}(mod \ m_{i_t})
%\end{array}
%\right.
%\end{equation*}
According to the Chinese remainder  theorem, the only solution can be easily computed as
\begin{equation}
\label{reconstruct}
y=\sum_{k=1}^{t}S_{i_k}M_{C,i_k}^{'} M_{C \backslash \{i_k\}}\quad (mod \ M_C)
\end{equation}

So $S=y(mod \ m_0)$. the  reconstruction can be accomplished and all secret $S \in Z_{m_0}$ can be reconstructed.
The work and schemes proposed before all assume that the exchange of the encrypted shares are all correct and without malicious forgery. But this can't be ensured when the participants are corrupted and act as a malicious attacker. But in the scheme here,  the authentication process  ensures that the second cheating problem exists in this phase can be eliminated too.

%Actually, the reconstruction can be done by a party who collects $t$ shares, This will be better fit for the environment of  service, e.g. authentication, signature, secure message transmission, etc.

In other parts of the paper, we will use a concision  form $S  \xrightarrow{CRT} \{S_i\}_{i \in \{1,\dots,n\}}$ to indicate  secret $S$'s  division into $n$ shares $\{S_i\}_{i \in \{1,\dots,n\}}$ according to CRT-VSS scheme with prime modulus parameters $\{m_i\}_{i \in \{1,\dots,n\}}$.
Similarly, $S=CRT_I(\{S_{i_k}\})_{k \in \{1,\dots,t\}}$ represents  secret  reconstructed by coalition $I=\{i_k\}_{k \in \{1,\dots,t\}}$, $|I|=t$  as Equation~\ref{reconstruct}.

\section{homomorphism property for CRT-VSS scheme, basis and preliminaries}
\label{homomorphic property}
In this section, we will explore the homomorphism property of CRT-VSS scheme and do some  preliminaries for the trusted ECC-DSS distributed authentication scheme later. It's only mentioned in \cite{homomorphisms},\cite{Iftene} that the extended Asmuth-Bloom scheme has homomorphic proprieties but without any proof about it and such discussion about the homomorphic property of CRT-VSS scheme is rare.

%The literature before all designs the share refreshing scheme with the Shamir's threshold scheme, such work can be seen in \cite{LidongZhou},\cite{xiongyan}. They all set the refreshing shares 0, then generate the refreshing subshares.  Here we will extend such refreshing schemes based on  $(\oplus,\oplus)$-homomorphism property, then build the corresponding refreshing idea with the  $(\oplus,\oplus)$-homomorphism of CRT scheme.

%\begin{definition}(the homomorphism property\cite{homomorphisms})\label{homo}
%Let $\oplus$ and $\otimes$ be binary  functions on the elements of the secret domain $S$ and of the share domain $\{S_i\}_{i \in \{1,\dots,n\}}$. A $(t,n)$ threshold scheme has the $(\oplus,\oplus)$(or $(\oplus,\otimes)$, $(\otimes,\otimes)$)-homomorphism property if for all coalition $I=\{i_k\}_{k \in \{1,\dots,t\}}$, $|I|=t$, $\Re$ donates some secret reconstruction scheme, whenever
%$S_1=\Re_I(\{S_{i_k}^{1}\})_{k \in \{1,\dots,t\}}$ and  $S_2=\Re_I(\{S_{i_k}^{2}\})_{k \in \{1,\dots,t\}}$, then
%\begin{itemize}
%  \item $(\oplus,\oplus)$-property. $S_1 \oplus S_2=\Re_I(\{S_{i_k}^{1} \oplus S_{i_k}^{2}\})_{k \in \{1,\dots,t\}}$.
%  \item $(\oplus,\otimes)$-property. $S_1 \oplus S_2=\Re_I(\{S_{i_k}^{1} \otimes S_{i_k}^{2}\})_{k \in \{1,\dots,t\}}$.
%  \item $(\otimes,\otimes)$-property. $S_1 \otimes S_2=\Re_I(\{S_{i_k}^{1} \otimes S_{i_k}^{2}\})_{k \in \{1,\dots,t\}}$.
%\end{itemize}
%
%\end{definition}

\begin{lemma}
\label{CRTproperty}
For any coalition $I$, $|I|=t$, the notation $CRT_I$ donates the reconstruction scheme, then all CRT-VSS scheme has  $(\oplus,\oplus)$ and $(\otimes,\otimes)$-homomorphism property, whenever
$S_1=CRT_I(\{S_{i_k}^{1}\})_{k \in \{1,\dots,t\}}$, $S_2=CRT_I(\{S_{i_k}^{2}\})_{k \in \{1,\dots,t\}}$, then
\begin{itemize}
  \item $(\oplus,\oplus)$. $S_1 \oplus S_2=CRT_I(\{S_{i_k}^{1} \oplus S_{i_k}^{2}\})_{k \in \{1,\dots,t\}}$.
  \item $(\otimes,\otimes)$. $S_1 \oplus S_2=CRT_I(\{S_{i_k}^{1} \otimes S_{i_k}^{2}\})_{k \in \{1,\dots,t\}}$.
\end{itemize}
\end{lemma}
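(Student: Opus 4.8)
The plan is to reduce the statement to the ring‑theoretic content of the Chinese Remainder Theorem. Fix the coalition $I=\{i_1,\dots,i_t\}$ and put $M_I=\prod_{k=1}^{t}m_{i_k}$. Since the setup guarantees $(m_i,m_j)=1$ for $i\neq j$, CRT furnishes a ring isomorphism
\[
\varphi\colon \mathbb{Z}_{M_I}\ \xrightarrow{\ \sim\ }\ \mathbb{Z}_{m_{i_1}}\times\cdots\times\mathbb{Z}_{m_{i_t}},\qquad
\varphi(y)=(y\bmod m_{i_1},\dots,y\bmod m_{i_t}),
\]
whose inverse is precisely the reconstruction map of Equation~\ref{reconstruct}: $\varphi^{-1}\big((S_{i_1},\dots,S_{i_t})\big)=\sum_{k}S_{i_k}M'_{I,i_k}M_{I\setminus\{i_k\}}\bmod M_I$, because each summand $M'_{I,i_k}M_{I\setminus\{i_k\}}$ is $\equiv 1\pmod{m_{i_k}}$ and $\equiv 0\pmod{m_{i_j}}$ for $j\neq k$. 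Being a ring isomorphism, $\varphi^{-1}$ is simultaneously additive and multiplicative, so for any two share tuples $(a_k)_k,(b_k)_k$ we get $\varphi^{-1}\big((a_k\oplus b_k)_k\big)=\varphi^{-1}\big((a_k)_k\big)\oplus\varphi^{-1}\big((b_k)_k\big)$ in $\mathbb{Z}_{M_I}$, and likewise with $\otimes$. This already proves the lemma if $\oplus$ and $\otimes$ on the secret side are read as the operations of $\mathbb{Z}_{M_I}$ on the reconstructed shape $y$.

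To get the statement in the form it is written --- where the recovered secret is $S=y\bmod m_0$ --- I would next write $CRT_I=\rho\circ\varphi^{-1}$, with $\rho\colon\mathbb{Z}\to\mathbb{Z}_{m_0}$ reduction modulo $m_0$ applied to the representative of $\varphi^{-1}(\cdot)$ in $[0,M_I)$. The point to check is that this representative step commutes with the two operations. For a valid shape $y_j=S_j+A_jm_0$ one has $y_j<M=\prod_{i=1}^{t}m_i\le M_I$ (the $m_i$ are increasing), so reconstruction returns $y_j$ exactly and $\rho(y_j)=S_j$; hence reconstructing the combined shares returns $(y_1+y_2)\bmod M_I$, resp. $(y_1y_2)\bmod M_I$, and one needs this to reduce mod $m_0$ to $S_1\oplus S_2$, resp. $S_1\otimes S_2$.

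I expect this last ``no wraparound'' check to be the real obstacle, because $\gcd(M_I,m_0)=1$, so $\rho$ is \emph{not} a well-defined map out of $\mathbb{Z}_{M_I}$: subtracting a multiple of $M_I$ changes the class mod $m_0$. The way through is a size argument using the slack in the parameters. For $\oplus$ it suffices that $y_1+y_2<M_I$, which can be arranged by choosing the blinding coefficients $A_j$ (or the coalition) so that the shapes stay below $M_I/2$. For $\otimes$ the requirement is stronger --- one needs $y_1y_2<m_0M_I$ --- and genuinely forces the shapes themselves to be small, not merely the secrets; here the condition $N=\prod_{i=1}^{t}m_i>m_0^{2}\prod_{i=1}^{t-1}m_{n-i+1}$, which carries an extra factor of $m_0$ beyond the plain Asmuth--Bloom security bound, is what supplies room for one bounded product. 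So the clean proof is: invoke the CRT ring isomorphism for the $\mathbb{Z}_{M_I}$-level identities (first paragraph), then discharge the mod-$m_0$ reduction under the size budget just described. Finally, I would flag the evident typo in the second bullet: it should read $S_1\otimes S_2=CRT_I\big(\{S_{i_k}^{1}\otimes S_{i_k}^{2}\}\big)$.
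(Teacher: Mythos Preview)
Your approach is correct and, in fact, considerably more careful than the paper's own treatment: the paper simply writes ``The proof is simple and easy similar to the proof on Shamir's scheme in \cite{homomorphisms} and we neglect it here,'' giving no argument at all. Your reduction to the CRT ring isomorphism $\mathbb{Z}_{M_I}\cong\prod_k\mathbb{Z}_{m_{i_k}}$ is exactly the right way to see the identity at the level of the shape $y$, and you are also right to isolate the mod-$m_0$ step as the only place where anything nontrivial can go wrong --- a point the paper does not even mention.

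One small correction in your size analysis: for the multiplicative case the needed bound is $y_1y_2<M_I$, not $y_1y_2<m_0M_I$. Since $\gcd(M_I,m_0)=1$, once the CRT representative wraps (i.e.\ once $y_1y_2\ge M_I$) the reduction mod $m_0$ is already corrupted, regardless of how far it has wrapped; the extra factor of $m_0$ buys nothing here. With that adjustment your argument goes through, and your observation about the parameter condition $N>m_0^{2}\prod_{i=1}^{t-1}m_{n-i+1}$ supplying the necessary headroom (and your flag of the typo in the $(\otimes,\otimes)$ bullet) are both well taken.
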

%\begin{proof}
%(a) For $(\oplus,\oplus)$-property. The proof is simple and easy similar to the proof on Shamir's scheme in \cite{homomorphisms}.
%(b) For $(\otimes,\otimes)$-property. The proof  can be turned into the modulus equation set below:
% $S_1\times S_2 \stackrel{?}{\equiv} S_{i_{k}}^1\times S_{i_{k}}^2(mod \ m_{i_{k}})$, $k \in \{1,\dots,t\}$
%  Since $S_i^1\times S_i^2(mod \ m_{i_k})
%                 \equiv (S_1\ mod \ m_{i_k})\times(S_2\ mod\ m_{i_k})
%                  \equiv S_1^1 \times  S_1^2 (mod\ m_{i_k})$.
%     So the equation set above holds. Thus, $S_1 \otimes S_2=CRT_I(\{S_{i_k}^{1} \otimes S_{i_k}^{2}\}_{k \in \{1,\dots,t\}}$ holds, the CRT-VSS scheme has such  $(\otimes ,\otimes )$-homomorphism property.
%\end{proof}

\begin{proof}
The proof is simple and easy similar to the proof on Shamir's scheme in \cite{homomorphisms} and we neglect it here.
\end{proof}

Among the technical difficulties for designing distributed authentication schemes, the secure shares-product sharing scheme combing shares of two secrets, e.g. $a$ and $b$, into shares of the product of $a\cdot b$, e.g. $\{a_i b_i\}_{i \in \{1,\dots,n\}}$ is outstanding and important. Though related work and discussion can be seen in  \cite{BGW88},\cite{gennaro1996robust}, but they are all designed with Shamir's polynomial VSS scheme. Here we will first do some description about them, then design the corresponding scheme for CRT-VSS scheme.

\begin{lemma}
\label{product-Shamir}
The scheme combing shares of two secrets, e.g. $a,b \in Z$, into shares of the product of $a\cdot b$ with Shamir's polynomial VSS scheme is as follows.
\begin{enumerate}
  \item  Setup work. With $(t,n)$ Shamir's VSS scheme, shares $\{a_i\}_{i \in \{1,\dots,n\}}$ and $\{b_i\}_{i \in \{1,\dots,n\}}$ are generated by random polynomial denoted as $f_a$ and $f_b$ with secret $a$ and $b$ respectively.\\
  e.g. $a \stackrel{f_a}{\longrightarrow}\{a_i\}_{i \in \{1,\dots,n\}}$, $b \stackrel{f_b}{\longrightarrow}\{b_i\}_{i \in \{1,\dots,n\}}$,
  \item  Refreshing shares, $\{c_i\}_{i \in \{1,\dots,n\}}$ are generated by random  polynomial $f_c$ with secret 0,
  e.g.
      $0 \stackrel{f_c}{\longrightarrow}\{c_i\}_{i \in \{1,\dots,n\}}$
  \item  After participant node $P_i$  exchange  $v_i=a_ib_i+c_i$, each Participant node $P_i$ locally computes
        $ab=F_I(\{v_i\})_{i \in \{1,\dots,t\}}$.
\end{enumerate}
the scheme is $(t-1)$  eavesdropping adversary tolerant secure under the $(t,n)$ threshold condition.
\end{lemma}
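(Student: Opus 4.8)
The plan is to split the claim into a \emph{correctness} part --- that $F_I(\{v_i\})$ really equals $ab$ --- and a \emph{privacy} part --- that a coalition passively corrupting at most $t-1$ participants learns nothing about $a$, $b$ (or about $ab$ beyond the value it is entitled to) --- and to establish privacy by exhibiting a simulator. Throughout I would work over the field $\mathbb{F}=\mathbb{Z}_q$ underlying the Shamir scheme.

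For correctness, set $h(x):=f_a(x)f_b(x)$, a polynomial of degree $2t-2$ with $h(0)=f_a(0)f_b(0)=ab$ and $h(i)=a_ib_i$. Taking $f_c$ of degree $2t-2$ with $f_c(0)=0$, the broadcast values are $v_i=(h+f_c)(i)$, so $g:=h+f_c$ has $\deg g\le 2t-2$ and $g(0)=ab$; hence $F_I$ is nothing but Lagrange interpolation of $g$ back to the point $0$, which returns $ab$ as soon as $2t-1$ honest contributions $v_i$ are available. I would make the requirement $n\ge 2t-1$ explicit at this point: interpolating a degree-$(2t-2)$ polynomial genuinely needs $2t-1$ evaluations, so the ``$\{v_i\}_{i\in\{1,\dots,t\}}$'' in the statement should be read as ``$2t-1$ of the $v_i$'', and the threshold condition is exactly what guarantees enough of them are present. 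This half is routine interpolation bookkeeping.

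The substance is the privacy part. Let $B$ be the corrupted index set, without loss of generality $|B|=t-1$; the adversary's view consists of $\{a_i,b_i,c_i\}_{i\in B}$, the public commitments of the (verifiable) sharings, all broadcast values $\{v_i\}$, and the output $ab$. I would exhibit a simulator that, given only $ab$ and the public parameters, reproduces this view with the same distribution: (i) since $|B|<t$, the tuples $(a_i)_{i\in B}$ and $(b_i)_{i\in B}$ are uniform and independent of $a,b$ by the standard Shamir privacy argument, and since $|B|\le 2t-2$ the tuple $(c_i)_{i\in B}$ is uniform as well, so the simulator samples $\hat a_i,\hat b_i,\hat c_i$ uniformly for $i\in B$, which fixes $\hat v_i:=\hat a_i\hat b_i+\hat c_i$; (ii) for the honest indices it samples a polynomial $\hat g$ uniformly from the affine set $\{\hat g:\deg\hat g\le 2t-2,\ \hat g(0)=ab,\ \hat g(i)=\hat v_i\ \forall i\in B\}$ and outputs $\hat v_i:=\hat g(i)$ for $i\notin B$; (iii) the commitments are faked in the usual manner, which is the one place where the hiding property of the commitment scheme is invoked, consistent with an eavesdropping adversary. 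Showing the simulation is perfect reduces to one observation: once the adversary's view together with $a,b$ is fixed, $h=f_af_b$ is a determined polynomial and $g=h+f_c$ is uniform over the coset $h+\{f:\deg f\le 2t-2,\ f(0)=0\}$; conditioning on $(g(i))_{i\in B}$, equivalently on $(f_c(i))_{i\in B}$, while noting that $g(0)=ab$ holds automatically, leaves $g$ uniform over exactly the set from which the simulator samples.

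The main obstacle is verifying that $f_c$ is \emph{rich enough}: one must check that the evaluation map $f_c\mapsto(f_c(i))_{i\in B}$ from the $(2t-2)$-dimensional space $\{f:\deg f\le 2t-2,\ f(0)=0\}$ onto $\mathbb{F}^{\,t-1}$ is surjective with uniform fibres of dimension $(2t-2)-(t-1)=t-1$, which uses that the indices in $B$ are distinct and nonzero and that $2t-2\ge t-1$. This is where $|B|=t-1$ is used essentially --- with $|B|\ge t$ the adversary's own shares would already determine $a,b$, or the masking subspace would be too small --- and it is also where one sees that the argument never exploits the special structure of $h=f_af_b$ (that it factors, that its coefficients are correlated), only that $h$ is \emph{some} fixed polynomial of degree $\le 2t-2$, which is precisely why the scheme works. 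I would close by noting that correctness together with this simulation is exactly the definition of $(t-1)$-eavesdropper tolerance, and flag that the CRT-VSS analogue needed later in the paper will require rerunning this linear-algebra count with the CRT reduction maps $x\mapsto x\bmod m_i$ playing the role of polynomial evaluation.
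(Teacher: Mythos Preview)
Your proposal is correct and considerably more thorough than what the paper offers. The paper does not actually prove this lemma: it simply refers the reader to \cite{BGW88} and remarks in two sentences that the degree-reduction step gives correctness while the randomization term $c_i$ is what blocks the ``decompose attack'' on the product polynomial. Your argument is precisely the BGW88 proof spelled out --- correctness via interpolation of $g=f_af_b+f_c$, privacy via an explicit simulator --- so the approaches coincide in spirit; you have simply carried out what the paper leaves to the citation. Two points worth retaining from your write-up that the paper glosses over: you correctly flag that reconstructing a degree-$(2t-2)$ polynomial needs $2t-1$ evaluations (so $n\ge 2t-1$ and the ``$t$'' in the statement is a slip), and your dimension count on the masking space $\{f:\deg f\le 2t-2,\ f(0)=0\}$ makes explicit why exactly $t-1$ corrupted indices are tolerated. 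The forward-looking remark about replaying the linear-algebra count with CRT reduction maps is apt and anticipates the paper's Theorem~\ref{product-CRT}.
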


The details about the correctness and security analysis can be seen in \cite{BGW88}. The  scheme consists of two main steps, degree reduction step ensure the correctness and randomization step strengthen the security. Since  $0 \stackrel{f_c}{\longrightarrow}\{c_i\}_{i \in \{1,\dots,n\}}$,  the reconstruction is  without retortion.
Furthermore, in order to maintain the $(t-1)$  eavesdropping adversary tolerant secure property(against the decompose attack towards Shamir's polynomial), the additional permutation term $c_i$ is necessary to keep the Shamir's polynomial secret and irreducible.
However, the weakness of the Shamir's scheme doesn't exist in CRT scheme and there is no need to keep such permutation term. Thus we can design the corresponding scheme with better concision and equal security.

\begin{theorem}
\label{product-CRT}
The scheme combing shares of two secrets, e.g. $a$ and $b$, into shares of the product of $a\cdot b$ with CRT scheme is $(t-1)$ eavesdropping adversary tolerant as follows.
\begin{enumerate}
  \item  Setup work. CRT scheme parameters $\{m_i\}_{i \in \{1,\dots,n\}}$, $M=\prod_{1\leq i \leq n}{m_i}$, $a,b \in Z_M$,$ab \in Z_M$. With CRT scheme,\\
  e.g. $a \stackrel{CRT}{\longrightarrow}\{a_i\}_{i \in \{1,\dots,n\}}$, $b \stackrel{CRT}{\longrightarrow}\{b_i\}_{i \in \{1,\dots,n\}}$,
  \item  After participant node $P_i$  exchange  $v_i=a_ib_i(mod\ n)$, each Participant node $P_i$ locally computes
        $ab=CRT_I\{v_i\}_{i \in \{1,\dots,t\}}$.
\end{enumerate}

the scheme is $(t-1)$  eavesdropping adversary tolerant secure under the $(t,n)$ threshold condition.
\end{theorem}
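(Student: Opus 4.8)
The plan is to split the statement into \emph{correctness} --- that $ab=CRT_I\{v_i\}_{i\in\{1,\dots,t\}}$ for every coalition $I$ with $|I|=t$ --- and \emph{privacy} --- that any adversary passively reading at most $t-1$ participants learns nothing about $a$ or $b$ --- and to obtain each part from material already established: Lemma~\ref{CRTproperty} together with the Chinese Remainder Theorem for the former, and the $(t-1)$-privacy of the underlying CRT-VSS scheme of Section~\ref{CRT-VSS} for the latter.

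For correctness I would start from the defining congruences $a_i\equiv a$, $b_i\equiv b\pmod{m_i}$, multiply them to get $a_ib_i\equiv ab\pmod{m_i}$, and note that the exchanged value $v_i=a_ib_i\bmod m_i$ (reading the modulus in the statement as $m_i$) therefore satisfies $v_i\equiv ab\pmod{m_i}$ for every $i$. I would then record that the reconstruction operator of Equation~\ref{reconstruct} depends on each input only through its residue modulo the corresponding $m_{i_k}$: replacing $S_{i_k}$ by $S_{i_k}+q\,m_{i_k}$ changes the $k$-th summand by $q\,m_{i_k}M'_{C,i_k}M_{C\setminus\{i_k\}}$, a multiple of $M_C=m_{i_k}M_{C\setminus\{i_k\}}$, which vanishes after reduction mod $M_C$. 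Hence $CRT_I\{v_i\}=CRT_I\{a_ib_i\}$, and this equals $a\otimes b=ab$ by the $(\otimes,\otimes)$-homomorphism of Lemma~\ref{CRTproperty} --- or, directly, because $CRT_I\{v_i\}$ is the unique residue in $Z_{M_C}$ congruent to $ab$ modulo every $m_i$, $i\in I$, and $ab$ itself lies in that range. The only bookkeeping needed here is the range condition $ab<M_C$, which plays the same role as the Asmuth--Bloom size condition imposed in the distribution phase.

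For privacy I would fix an adversary that reads the internal state of a set $J$ of participants with $|J|\le t-1$; its view is the shares $\{a_i,b_i\}_{i\in J}$ together with the products $v_i$ it computes or receives. The values $v_i$ for $i\in J$ are deterministic functions of data the adversary already holds, while every further $v_j$ it may see equals $ab\bmod m_j$, a function of the product $ab$ whose disclosure is the whole point of the protocol; so no $v$ carries information about $a$ or $b$ beyond the $t-1$ residues in $\{a_i\}_{i\in J}$ and $\{b_i\}_{i\in J}$. Since those residues leave $a$ (respectively $b$) undetermined --- this is precisely the $(t-1)$-privacy secured by the parameter condition $N=\prod_{i=1}^{t}m_i>m_0^{2}\prod_{i=1}^{t-1}m_{n-i+1}$ on the base scheme --- the product scheme inherits $(t-1)$-eavesdropping tolerance. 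I would finish by contrasting this with Lemma~\ref{product-Shamir}: there the randomizing term $c_i$ is unavoidable because $a_ib_i$ is a point of the reducible degree-$2(t-1)$ polynomial $f_af_b$, whereas in the CRT setting $a_ib_i\bmod m_i$ \emph{is} the CRT-share of $ab$ at index $i$, so no extra masking is required.

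The step I expect to be the real obstacle is making the privacy argument rigorous rather than heuristic: one must pin down the communication/adversary model (the CSP-style model announced in the abstract), verify that the encrypted pairwise exchange leaks nothing beyond the intended $v_j$'s, and --- most delicately --- show that dropping the randomization term costs nothing, i.e.\ that conditioned on the released product $ab$ the distribution of $a$ (and of $b$) given the $t-1$ corrupted shares is still the full one. Correctness, in contrast, is essentially a corollary of Lemma~\ref{CRTproperty} once the mod-$m_{i_k}$ invariance of the reconstruction map is noted.
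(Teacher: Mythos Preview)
Your proposal is correct and follows essentially the same two-part structure as the paper: correctness via the $(\otimes,\otimes)$-homomorphism of Lemma~\ref{CRTproperty}, and security by appealing to the $(t-1)$-privacy of the underlying CRT sharing together with the observation that the Shamir-specific polynomial-decomposition weakness (which necessitates the $c_i$ mask in Lemma~\ref{product-Shamir}) is absent here. Your treatment is in fact more careful than the paper's---you make explicit the mod-$m_{i_k}$ invariance of the reconstruction map, flag the range condition $ab<M_C$, and correctly read the modulus in step~2 as $m_i$---whereas the paper's security argument is essentially the one-line heuristic you anticipated; the rigorous privacy analysis you identify as the real obstacle is not carried out in the paper either.
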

\begin{proof}
We will prove the correctness and security respectively.
\begin{enumerate}
    \item \textbf{Correctness.} According to  Lemma~\ref{CRTproperty},
 \begin{spacing}{0.6}
   \begin{small}
   \begin{align*}
    \begin{split}
  CRT_I&(\{v_i\})_{i \in \{1,\dots,t\}}=CRT_I(\{a_ib_i\})_{i \in \{1,\dots,t\}}\\
  &=CRT_I(\{a_i\})_{i \in \{1,\dots,t\}}\cdot CRT_I(\{b_i\})_{i \in \{1,\dots,t\}}\\
  &=a\cdot b(mod \ M)
    \end{split}&
    \end{align*}
    \end{small}
    \end{spacing}
So the reconstruction of $ab$ maintains well.

    \item \textbf{Security.}  On one hand, the reconstruction of $ab$ is $(t-1)$ tolerant, on the other hand, though the scheme is designed without such refreshing permutation term $c_i$, while the weakness of decomposing for  Shamir's scheme doesn't exist in CRT scheme. So the adversary can't get more information like the  decomposing attack with Shamir's scheme. Hence, the scheme is $(t-1)$  eavesdropping adversary tolerant secure.
\end{enumerate}
\end{proof}

According to  Theorem~\ref{product-CRT}, the secure shares-product sharing scheme based on CRT scheme has better concision(without the refreshing permutation term $c_i$) compared with Shamir's scheme in Lemma~\ref{product-Shamir}. From Lemma~\ref{CRTproperty} and Theorem~\ref{product-CRT}.

\section{a distributed ECC-DSS authentication scheme based on CRT-VSS and trusted computing}
\label{ECC-DSS authentication}
Among the schemes  about the distributed  authentication protocol, the robust Digital Signature Standard(DSS) signature  scheme proposed in \cite{gennaro1996robust} is significant, it can achieve security in different conditions. While the scheme is built on the basis of Shamir's scheme and the complexity of exponent arithmetic is high. Besides, the biggest problem is that the choice of the authentication  nodes is random. The lack of the validation of the secret shares in the node and the trusted property may lead to DoS and fault attack security problem.
The distributed ECC-DSS authentication scheme based on CRT-VSS and trusted computing proposed later will solve those problems together.
The  whole protocol consists of three phases, including trusted choice phase, signature generation phase and signature verification phase.

\subsection{trusted choice phase}
Suppose the request node $A$ apply for a signature for distributed authentication with message $M$. Here we don't focus on a password-protected mechanism \cite{passwordCCS11}  with secret sharing in some e-commercial scenarios(e.g., \cite{pirCCS2011}) and don't contain it explicitly. The details are as follows.
\begin{enumerate}
  \item Compute the digest of the message $M$ with some Hash function, e.g. $m$=SHA-1($M$).
  \item $A$ choose $n$ trusted authentication nodes $P_i(1 \leq i \leq n)$ to ask for distributed  authentication. The process is similar to that in distribution phase in CRT-VSS scheme in section~\ref{CRT-VSS}.
     \begin{itemize}
       \item A $\stackrel{B}\rightarrow$ $\{P_i\}$($P_i \in \mathbb{P}$): $id_D$,REQUEST.
       \item $P_i$ $\rightarrow$ A: $id_i$,$id_A$,$auth(id_i,id_A)$.\par
     \end{itemize}

       After the authentication process, request node $A$ choose $t$ qualified nodes as a coalition $C$ to finish the distributed authentication later.
  \item Request node $A$ broadcast the digest $m$ to the nodes in coalition $C$ as
    $A \stackrel{B}\rightarrow$ $\{P_i\}$($P_i \in \mathbb{C}$): $id_A$,$m$.
\end{enumerate}
\subsection{DSS signature generation phase}
This phase will do the generation work of the DSS signature. The scheme will be founded on the basis of Elliptic curve and the DSS signature\cite{gennaro1996robust}. Here are some notations and preliminaries before the phase. Suppose point $G$ is one point on the Elliptic curve $E_p(a,b) \in Z_n$ with the order $q$, $f$ is a big prime. The secret private key $S$ has been divided into $S_i$'s and  shared by node $P_i$ respectively. On the Elliptic curve, the public key denoted as $Q=SG$. Some necessary constrains: $0 \leq S_i \leq q$, $M_C \leq M \leq n$. The notation $\lambda \stackrel{CRT}{\longrightarrow}\{\lambda_i\}_{i \in \{1,\dots,n\}}$ represents that number $\lambda$ can be divided into shares $\{\lambda_i\}_{i \in \{1,\dots,n\}}$ with our CRT scheme proposed in section \ref{simple authentication} and  notation $\lambda =CRT_I\{\lambda_i\}_{i \in \{1,\dots,t\}}$ represents the reconstruction of $\lambda$ with $t$ shares  $\{\lambda_i\}_{i \in \{1,\dots,t\}}$.
The details  of the protocol are as follows.
\begin{enumerate}
  \item  Setup work. Choose random values $k,a \in Z_f$, $k \stackrel{CRT}{\longrightarrow}\{k_i\}_{i \in \{1,\dots,n\}}$, $a \stackrel{CRT}{\longrightarrow}\{a_i\}_{i \in \{1,\dots,n\}}$.
  \item Participant node $P_i$ broadcast $v_i=k_ia_i(mod\ n)$ and $w_i=a_iG(mod\ n)$. After collecting the $v_i$'s and $w_i$'s, each Participant node $P_i$ locally computes
      \begin{itemize}
        \item $ka=CRT_I\{v_i\}_{i \in \{1,\dots,t\}}(mod \ f)$
        \item $aG=CRT_I\{w_i\}_{i \in \{1,\dots,t\}}(mod\ f)$
        \item Compute $r=k^{-1}G=(ka)^{-1}(aG)(mod\ f)$
      \end{itemize}
  \item Each Participant $P_i$ broadcasts $sig_i=k_i(m+rS_i)$. After collecting, each Participant node $P_i$  computes global signature $sig=CRT_I\{sig_i\}_{i \in \{1,\dots,t\}}(mod\ f)$
  \item Output the $(r,s)$ as the signature for $m$.
\end{enumerate}
\subsection{signature verification phase}
\begin{enumerate}
  \item  Check the public parameters and  limitations, do the digest as m=SHA-1(M);
  \item  Do some preliminaries. $w=s^{-1}$,$u_1=mw(mod\ n)$, $u_2=rw(mod\ f)$;
  \item  Use the public key $Q=SG(mod \ f)$ to do the verification as $X=u_1G+u_2Q$. If $X=O$, then declare it as a fault signature ,or else suppose $X=(x,y)$, compute $v=x(mod\ f)$, if $v=r$ holds, then accept the signature, or else refuse.
\end{enumerate}

Here we will do some simple analysis about the correctness of the protocol and its security analysis will be done later in section~\ref{security analysis} separately. The proof of the correctness will have two aspect on the $s$ and $r$ respectively.
%If the signature $(r,s)$ on the message m is generated by the coalition of the t participant  nodes, then
   \begin{enumerate}
     \item Correctness of $r$. According to  the protocol above,
      \begin{spacing}{0}
   \begin{small}
   \begin{align*}
    \begin{split}
  r&=(ka)^{-1}(aG)(mod\ f)\\
  &=(CRT_I\{v_i\})_{i \in \{1,\dots,t\}}^{-1}\cdot CRT_I\{w_i\}_{i \in \{1,\dots,t\}}(mod\ f)\\
   &=(CRT_I\{k_ia_i\})_{i \in \{1,\dots,t\}}^{-1}\cdot CRT_I\{a_iG\}_{i \in \{1,\dots,t\}}\\
   &=CRT_I(\{k_ia_i\}_{i \in \{1,\dots,t\}})^{-1} \cdot CRT_I\{a_iG\}_{i \in \{1,\dots,t\}}(mod\ f) \\
   &\equiv (ka)^{-1}\cdot (aG)(mod\ f) \equiv k^{-1}G(mod\ f).
    \end{split}&
    \end{align*}
    \end{small}
    \end{spacing}
     \item  Correctness of  $sig$, combining the equation $r=k^{-1}G$ which has been proved in (1) above,
          \begin{spacing}{0.6}
   \begin{small}
   \begin{align*}
    \begin{split}
   sig&=CRT_I\{sig_i\}_{i \in \{1,\dots,t\}}(mod\ f)\\
      &=CRT_I\{k_i(m+rS_i)\}_{i \in \{1,\dots,t\}}(mod\ f)\\
      &=CRT_I\{k_im\}_{i \in \{1,\dots,t\}}+CRT_I\{k_iS_ir\}_{i \in \{1,\dots,t\}}(mod\ f) \\
      &\equiv km+rCRT_I\{k_i\}_{i \in \{1,\dots,t\}}CRT_I\{S_i\}_{i \in \{1,\dots,t\}}\\
      &=km+rkS(mod\ f)\\
      &=k(m+Sr)(mod\ f).
    \end{split}&
    \end{align*}
    \end{small}
    \end{spacing}

     It is the right shape of the DSS signature  $s$.

   \end{enumerate}

\section{some discussion about shares refreshing scheme}
\label{refreshing scheme}
\subsection{refreshing discussion on Shamir's scheme}
\label{refreshing-shamir}

In the scheme proposed in \cite{LidongZhou} and other literature, the traditional refreshing scheme of the shares use the $(\oplus,\oplus)$-homomorphism property for Shamir's scheme\cite{Shamir}.
Each  participant nodes  randomly generate refreshing shares and divide them into subshares by randomly chosen  $t$-degree random Shamir's polynomial with sharing of 0. After participants finish exchanging the subshares, they refresh the shares with subshares by addition operation. 
 %\begin{enumerate}
%   \item Each  participant node $P_i(1 \leq i \leq n)$ randomly generate refreshing shares $w_i(1 \leq i \leq n)$.   Let $f_i$ be the chosen  $t$-degree random Shamir's polynomial with sharing of 0, hence $f_i(0)=w_i$;
%   \item Furthermore, suppose $w_i$ generate the refreshing subshares  according to the polynomial $f_i$, notated $w_i \stackrel{f_i}{\longrightarrow}\{w_{ij}\}_{j \in \{1,\dots,n\}} $, the  newly generated $w_{ij}$'s  are called refreshing subshares in \cite{LidongZhou}.
%   \item  Then all subshares $w_{ij}$'s is distributed to participant $P_j$ through a secure link.
%   \item   When participant $P_j$ gets the subshares $w_{ij}(i \neq j)$ from all other nodes , it can compute a new share from these refreshing subshares and its old share $S_j^{'} = S_j + \sum_{1 \leq i \leq n}{w_{ij}}$.
% \end{enumerate}
The literature before all designs the share refreshing scheme with the Shamir's threshold scheme in \cite{LidongZhou},\cite{xiongyan}. They all set the refreshing shares 0, then generate the refreshing subshares.  Here we will extend such refreshing schemes based on  $(\oplus,\oplus)$-homomorphism property for Shamir's scheme.

\begin{theorem}
The refreshing scheme based on  $(\oplus,\oplus)$-homomorphism property for Shamir's scheme\cite{Shamir} is correct as follows. For each participant $P_i$, Refreshing shares(RS) constraints and subshares generation:
\label{(plusShamir)}

 \begin{equation*}
\left\{
\begin{array}{l@{\;\;}l}
\sum_{i=1}^{n}w_{i}=0   \quad (a)\\
w_i \stackrel{f_i}{\longrightarrow}\{w_{ij}\}_{j \in \{1,\dots,n\}}
%w_i=F_I{ (w_{i1},w_{i2}, \dots, w_{in})}(subshares)
\end{array}
\right.
\end{equation*}
Refreshing operation:

  $$S_j^{'} = S_j + \sum_{1 \leq i \leq n}{w_{ij}}\quad (b)$$

\end{theorem}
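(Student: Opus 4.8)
The plan is to verify the two equations $(a)$ and $(b)$ in Theorem~\ref{(plusShamir)} establish a correct refreshing scheme, which means showing two things: that the refreshed shares $\{S_j^{'}\}_{j \in \{1,\dots,n\}}$ still reconstruct the same secret $S$ under any qualified coalition $I$ with $|I|=t$, and that the refreshed shares are distributed as fresh Shamir shares of $S$ (so that an adversary's knowledge of old subshares from previous periods is useless). I would first recall that $S_j = f_S(j)$ for the original sharing polynomial $f_S$ with $f_S(0)=S$, and that each $w_{ij} = f_i(j)$ with $f_i(0)=w_i$; then $S_j^{'} = f_S(j) + \sum_{1\leq i\leq n} f_i(j) = \bigl(f_S + \sum_i f_i\bigr)(j)$, so define $f_S^{'} := f_S + \sum_{1\leq i\leq n} f_i$, a polynomial of degree at most $t-1$.

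The key computation is then $f_S^{'}(0) = f_S(0) + \sum_{1\leq i\leq n} f_i(0) = S + \sum_{1\leq i\leq n} w_i = S + 0 = S$ by constraint $(a)$. Hence $\{S_j^{'}\}$ are exactly the values of a degree-$(t-1)$ polynomial with constant term $S$, so by Lagrange interpolation any coalition $I$ of size $t$ recovers $S$ via the standard Shamir reconstruction $F_I$; invoking the $(\oplus,\oplus)$-homomorphism of Lemma-type results cited for Shamir's scheme (the refreshing is just iterated addition of sharings, one of the secret $S$ and $n$ sharings of $0$), we get $F_I(\{S_j^{'}\}_{j\in I}) = F_I(\{S_j\}_{j\in I}) \oplus \bigoplus_i F_i(\{w_{ij}\}_{j\in I}) = S \oplus 0 \oplus \cdots \oplus 0 = S$. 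This handles correctness.

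For the freshness/security half I would argue that, because each honest $P_i$ picks $f_i$ uniformly at random among degree-$(t-1)$ polynomials with $f_i(0)=w_i$ and the $w_i$ are chosen consistently with $(a)$, the sum $\sum_{1\leq i\leq n} f_i$ is a uniformly random degree-$(t-1)$ polynomial subject only to evaluating to $0$ at the origin (as long as at least one summand is honestly random, which holds under the $(t,n)$ threshold / honest-majority-of-the-relevant-set assumption). Therefore $f_S^{'}$ is distributed as a uniformly random degree-$(t-1)$ polynomial with $f_S^{'}(0)=S$, independent of $f_S$; an adversary holding up to $t-1$ old shares and all leaked subshares learns nothing new about $S$ from the refreshed shares, which is the content of mobile-adversary tolerance in \cite{LidongZhou}.

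I expect the main obstacle to be not the algebra — which is a one-line interpolation argument — but stating the security claim precisely enough: one must be careful about which parties are assumed honest during the refresh, how the $w_i$ are jointly generated so that $(a)$ holds without a trusted dealer (typically each $P_i$ locally picks $w_{i1},\dots,w_{in}$ summing to $0$, which is the dual view of condition $(a)$ and should be reconciled with the statement), and the fact that verifiability of the subshares (needed so corrupted $P_i$ cannot inject inconsistent $w_{ij}$) is inherited from the underlying VSS rather than proved here. Since the theorem as stated only asserts \emph{correctness}, I would keep the security remarks brief and focus the proof on the interpolation identity $f_S^{'}(0)=S$, noting the homomorphism lemma does the rest.
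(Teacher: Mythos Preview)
Your proposal is correct: the interpolation identity $f_S'(0)=f_S(0)+\sum_i f_i(0)=S+\sum_i w_i=S$ together with the $(\oplus,\oplus)$-homomorphism is exactly the standard argument, and your remarks on freshness and on the distributed generation of the $w_i$ are appropriate side comments given that the theorem only claims correctness. Note that the paper itself does not actually supply a proof here---it merely states that the argument is easy and similar to \cite{Shamir} and omits it for space---so your write-up is strictly more detailed than the paper's, but it is precisely the computation the authors are gesturing at.
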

\begin{proof}The proof is easy and similar to \cite{Shamir}, thus we neglect it due to the space limitation here.
\end{proof}

Obviously, the correctness of refreshing scheme above is built on the assumption that all the participants are correct and honest. As mentioned before that the lack of the trusted assurance of the generation of the refreshing subshares will result to the retortion problems of the secret $S$.
%In the scheme proposed below, we will take the trusted computing technique to eliminate the problem and design the refreshing scheme with the extended refreshing  scheme on Shamir's scheme illustrated above.
%
%
%\begin{enumerate}
%\item \textbf{Trusted choice.} The trusted choice of refreshing shares provider set as before, notated as $A=\{P_{i_j}\}_{ j \in \{1,\dots,k\}}$, $| A^{\oplus} | =k$, and $1 \leq k \leq n$.
%\item \textbf{refreshing shares generation.} All the elected trusted participant $\{P_{i_j}\}_{ j \in \{1,\dots,k\}}$ generate the refreshing shares $\{w_{i_{j}}\}_{ j \in \{1,\dots,k\}}$ according to condition limitation as $\sum_{i=1}^{k}w_{i_j}= 0(mod \ M)$;
%\item \textbf{refreshing subshares generation.}  Furthermore,suppose $w_{i_j}$ generate the refreshing subshares  according to the polynomial $f_{i_j}$, notated $w_{i_j} \stackrel{f_{i_j}}{\longrightarrow}\{w_{iq}\}_{q \in \{1,\dots,n\}} $. Then  deliver $w_{iq}$'s to the other node respectively through secure channel.
%\item \textbf{refreshing operation.}  When participant $P_j$ gets the subshares $w_{ij}(i \neq j)$ from all other nodes , it can compute a new share from these refreshing subshares and its old share, $S_j^{'} = S_j +\sum_{1 \leq i \leq k}{w_{ij}}(mod\ m_j)$.
%
%\end{enumerate}

\subsection{refreshing discussion on CRT-VSS scheme}
\label{refreshing-CRT}
Now we discuss the refreshing scheme on CRT-VSS scheme, combining the $(\oplus,\oplus)$ and $(\otimes,\otimes)$ property in Lemma~\ref{CRTproperty}, we can easily get two trivial  distributed refreshing schemes in  the similar way as Shamir's scheme in Theorem~\ref{(plusShamir)}, which can furthermore deduce a general mixed scheme. Here We skip the illustration of such two trivial schemes and deliver the mixed scheme below.
%\begin{proposition}
%  Similarly, the scheme based on $(\oplus,\oplus)$-homomorphism property for  CRT scheme is correct as follows.
%\label{plusCRT}
%RS constraints and subshares generation:
% \begin{equation*}
%\left\{
%\begin{array}{l@{\;\;}l}
%\sum_{i=1}^{n}w_{i}\equiv 0&(mod \ M)\quad (a)\\
%\{w_{i}\}_{i \in \{1,\dots,n\}} \not \equiv 0&(mod\ M)\quad (b)\\
%w_i \stackrel{CRT}{\longrightarrow}\{w_{ij}\}_{j \in \{1,\dots,n\}}
%%w_i=F_I{ (w_{i1},w_{i2}, \dots, w_{in})}(subshares)
%
%\end{array}
%\right.
%\end{equation*}
%here  $M=\prod_{i=1}^{n}m_i$.Refreshing operation:
%
%  $$S_j^{'} = S_j + \sum_{1 \leq i \leq n}{w_{ij}}\quad (c)$$
%
%\end{proposition}
%
%
%
%\begin{proposition}The scheme based on $(\otimes,\otimes)$-homomorphism property  for  CRT scheme is correct as follows.
%\label{(timeCRT)}
%RS constraints and subshares generation:
% \begin{equation*}
%\left\{
%\begin{array}{l@{\;\;}l}
%\prod_{i=1}^{n}w_{i}\equiv 1&(mod \ M)\quad (a)\\
%\{w_{i}\}_{i \in \{1,\dots,n\}} \not \equiv 1&(mod\ M)\quad (b)\\
%w_i \stackrel{CRT}{\longrightarrow}\{w_{ij}\}_{j \in \{1,\dots,n\}}
%%w_i=F_I{ (w_{i1},w_{i2}, \dots, w_{in})}(subshares)
%\end{array}
%\right.
%\end{equation*}
%here  $M=\prod_{i=1}^{n}m_i$.Refreshing operation:
%
%  $$S_j^{'} = S_j \times \prod_{1 \leq i \leq n}{w_{ij}}\quad (c)$$
%
%\end{proposition}
 % In the  scheme below, $(n-k)(1\leq k \leq n-1)$ nodes generate  refreshing shares following $(\oplus,\oplus)$ scheme, while $k$ nodes left with  $(\otimes,\otimes)$ scheme.
\begin{proposition}
\label{MIXCRT}
The $(\oplus,\oplus)$ and $(\otimes,\otimes)$-mixed  scheme based on CRT-VSS scheme is correct as follows.
RS constraints and subshares generation:
 \begin{equation*}
\left\{
\begin{array}{l@{\;\;}l}
\prod_{i=1}^{k}w_{i}\equiv 1,\sum_{i=k+1}^{n}w_{i}\equiv 0&(mod \ M)\\
\{w_{i}\}_{i \in \{1,\dots,k\}} \not \equiv 1&(mod\ M)\quad (a)\\
\{w_{i}\}_{i \in \{k+1,\dots,n\}} \neq 0&(mod\ M)\quad (b)\\
w_i \stackrel{CRT}{\longrightarrow}\{w_{ij}\}_{j \in \{1,\dots,n\}}
%w_i=F_I{ (w_{i1},w_{i2}, \dots, w_{in})}(subshares)
\end{array}
\right.
\end{equation*}
Refreshing operation:
\begin{small}
    $$S_j^{'} = S_j \times \prod_{1 \leq i \leq k}{w_{ij}}+\sum_{i=k+1}^{n}w_{ij}\quad (c)$$

\end{small}
\end{proposition}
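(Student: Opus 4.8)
The plan is to verify that the refreshed shares $\{S_j'\}_{j\in\{1,\dots,n\}}$ produced by operation (c) are exactly the CRT-shares of the original secret $S$, i.e. that $S = CRT_I(\{S_j'\}_{j\in I})$ for every coalition $I$ with $|I|=t$, provided the original shares satisfy $S = CRT_I(\{S_j\}_{j\in I})$. The whole argument rests on Lemma~\ref{CRTproperty}, which gives both the $(\oplus,\oplus)$ and the $(\otimes,\otimes)$ homomorphism of the reconstruction map $CRT_I$, so the bulk of the work is bookkeeping with these two homomorphisms.

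First I would fix a coalition $I$, $|I|=t$, and apply $CRT_I$ to both sides of the refreshing operation (c). Since each $w_i$ is itself CRT-shared as $w_i \xrightarrow{CRT}\{w_{ij}\}_j$, we have $w_i = CRT_I(\{w_{ij}\}_{j\in I})$ for each $i$. Using the $(\otimes,\otimes)$ part of Lemma~\ref{CRTproperty} repeatedly over $i=1,\dots,k$ gives
\begin{align*}
CRT_I\Bigl(\bigl\{\textstyle\prod_{1\le i\le k} w_{ij}\bigr\}_{j\in I}\Bigr) &= \prod_{1\le i\le k} CRT_I(\{w_{ij}\}_{j\in I}) = \prod_{1\le i\le k} w_i \equiv 1 \pmod M,
\end{align*}
by RS-constraint (the first equation, $\prod_{i=1}^k w_i \equiv 1$). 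Likewise the $(\oplus,\oplus)$ part over $i=k+1,\dots,n$ gives $CRT_I(\{\sum_{i=k+1}^n w_{ij}\}_{j\in I}) = \sum_{i=k+1}^n w_i \equiv 0 \pmod M$. Then, combining $(\oplus,\oplus)$ and $(\otimes,\otimes)$ once more on (c),
\begin{align*}
CRT_I(\{S_j'\}_{j\in I}) &= CRT_I(\{S_j\}_{j\in I}) \times \prod_{1\le i\le k} CRT_I(\{w_{ij}\}_{j\in I}) + \sum_{i=k+1}^n CRT_I(\{w_{ij}\}_{j\in I}) \\
&\equiv S \times 1 + 0 \equiv S \pmod{M},
\end{align*}
which is the desired conclusion. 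I would also note that the two trivial schemes are the special cases $k=n$ (pure multiplicative) and $k=0$ (pure additive), so the mixed statement subsumes them.

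The one genuine subtlety — and the step I'd expect to need the most care — is not the homomorphism algebra but the role of the side-conditions (a) and (b): $\{w_i\}\not\equiv 1$ and $\{w_i\}\neq 0 \pmod M$. These do not affect \emph{correctness} at all (the computation above goes through regardless), so for the correctness claim as literally stated I would simply remark that they are imposed to rule out degenerate refreshings that would leave some share unchanged, and hence are security/freshness conditions rather than correctness conditions. If a referee wants them justified, the cleanest move is to point out that (a) and (b) guarantee each individual factor/summand genuinely perturbs the share vector, which is exactly the property needed for proactive resharing against a mobile adversary as in \cite{LidongZhou}; but I would keep that as a remark and not fold it into the correctness proof, mirroring the paper's treatment of the analogous point after Theorem~\ref{(plusShamir)}. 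Given the paper's stated style ("the proof is easy and similar to \cite{Shamir}"), I expect the final write-up to be three or four lines citing Lemma~\ref{CRTproperty} and the RS-constraints, exactly as sketched above.
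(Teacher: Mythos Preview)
Your homomorphism argument via Lemma~\ref{CRTproperty} does establish that $CRT_I(\{S_j'\}) \equiv S$, so the \emph{reconstruction-correctness} claim is fine. However, the paper does not argue this way, and the difference is not cosmetic: it affects the narrative role of the proposition.

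The paper's own treatment is a direct modular computation on a single share: reduce $S_j'$ modulo $m_j$ and observe that, because $w_{ij}\equiv w_i\pmod{m_j}$ and the RS-constraints $\prod_{i\le k}w_i\equiv 1$, $\sum_{i>k}w_i\equiv 0$ hold modulo $M$ (hence modulo each $m_j$), one gets
\[
S_j' \;\equiv\; S_j\cdot\!\!\prod_{1\le i\le k} w_i \;+\!\!\sum_{k+1\le i\le n}\! w_i
\;\equiv\; S_j\cdot 1 + 0 \;\equiv\; S_j \pmod{m_j}.
\]
This is shorter than invoking Lemma~\ref{CRTproperty}, but more importantly it proves something strictly stronger than you do: the \emph{refreshed share equals the old share modulo $m_j$}. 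That is precisely the point of this passage --- the paper immediately concludes that ``the refreshing results has little function under the modulus $m_i$ operation,'' i.e.\ the mixed scheme \emph{fails} as a refreshing scheme, and this feeds directly into Theorem~\ref{distribut-failure}. Your reconstruction-level argument shows the secret is preserved but does not expose that the shares themselves are unchanged, so it misses the punchline.

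Relatedly, your reading of side-conditions (a) and (b) is backwards. You write that they ``guarantee each individual factor/summand genuinely perturbs the share vector''; the paper's computation shows exactly the opposite --- even with (a) and (b) in force, $S_j'\equiv S_j\pmod{m_j}$ for every $j$, so no perturbation survives the reduction. Conditions (a) and (b) are there only to make the scheme look nontrivial (ruling out the degenerate choices $w_i=1$ or $w_i=0$), and the paper's message is that this is not enough: nontrivial $w_i$ still yield trivial refreshing in CRT-VSS. You should replace your remark on (a), (b) with this observation, and rework the argument as the per-share modular reduction above rather than the global $CRT_I$ computation.
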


%Though The trivial schemes above can achieve reconstruction without retortion, but the refreshing has little function in the arithmatic of modulus $m_i$.
%Without loss of generality, we will do the simple proof about the Theorem~\ref{MIXCRT},
%\begin{proof}
%for all nodes $P_i(1 \leq i \leq n)$, the reconstruction equation:
%\begin{center}
%   \begin{spacing}{0}
%   \begin{small}
%   \begin{flalign*}
%    \begin{split}
% \hspace{10mm} &CRT_I(\{S_j^{'}\})_{j \in \{1,\dots,n\}}\\
% &= CRT_I(\{\prod_{1 \leq i \leq n}S_j\cdot w_{ij}+\sum_{k+1 \leq i \leq n}{w_{ij}}\})_{j \in \{1,\dots,n\}}\\
% &=CRT_I(\{S_i\}_{i \in \{1,\dots,n\}})\cdot\prod_{1 \leq i \leq k}CRT_I(w_{ij})_{j \in \{1,\dots,n\}}\\
% &\quad +\sum_{1 \leq i \leq n}CRT_I(\{w_{ij}\})_{j \in \{1,\dots,n\}}(mod\ M)\\
% &=S\cdot \prod_{1 \leq i \leq k}w_i+\sum_{k+1 \leq i \leq n}w_i(mod\ M)\\
% &\equiv S\times1+0(mod\ M)\equiv S(mod\ M)
%    \end{split}&
%    \end{flalign*}
%    \end{small}
%    \end{spacing}
% \end{center}
We conduct modulus arithmatic operation,
\begin{center}
   \begin{spacing}{0}
   \begin{small}
   \begin{flalign*}
    \begin{split}
 \hspace{10mm} &S_j^{'}\ mod\ m_i \\
 &= S_j \times \prod_{1 \leq i \leq k}{w_{ij}}+\sum_{k+1 \leq i \leq n}w_{ij}(mod\ m_i)\\
 &\equiv S_j \times (\prod_{1 \leq i \leq k}{w_{ij}}\ mod\ m_i)+(\sum_{k+1 \leq i \leq n}w_{ij}\ mod\ m_i)\\
 &\equiv S_j \times 1+0 \equiv S_j
    \end{split}&
    \end{flalign*}
    \end{small}
    \end{spacing}
 \end{center}

 Obviously, the refreshing results has little function under the modulus $m_i$ operation. So the three trivial refreshing schemes aren't correct refreshing schemes for CRT-VSS scheme.

Unluckily, besides the failure with  trivial schemes above, we can have the furthermore upsetting conclusion as follows.
\begin{theorem}
\label{distribut-failure}
  Without changing the prime parameters set $\{m_i\}_{i \in \{1,\dots,n\}}$, the refreshing scheme without the retortion problems can not avoid the refreshing failure under the modulus $m_i$ operation, formally denoted as

   \begin{equation*}
\left\{
\begin{array}{l@{\;\;}l}
S =CRT_I\{S_i\}_{i \in \{1,\dots,n\}}\quad (a)\\
S=CRT_I\{S_{i}^{'}\}_{i \in \{1,\dots,n\}}\quad (b)\\
%w_i=F_I{ (w_{i1},w_{i2}, \dots, w_{in})}(subshares)
\end{array}
\right.
\end{equation*}
$\Rightarrow S_i \equiv S_{i}^{'}\ (mod\ m_i)\quad (1 \leq i \leq n).$

\end{theorem}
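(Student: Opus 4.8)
The plan is to peel the two hypotheses down to a statement about the common CRT-lift of each sharing, and then show that a legitimate (retortion-free) distributed refreshing cannot move that lift modulo any single $m_i$. First I would record what reconstruction actually produces: for a coalition $I$ the parties form the value $y_I=\sum_{k=1}^{t}S_{i_k}M_{C,i_k}^{'}M_{C\backslash\{i_k\}}\bmod M_C\in[0,M_C)$ of Equation~\ref{reconstruct} and output $S=y_I\bmod m_0$. ``Without the retortion problems'' means precisely that all coalitions agree, so hypothesis (a) puts the original shares in honest Asmuth--Bloom form: there is one integer $y$ with $0\le y<M\le M_C$ and $S_i\equiv y\pmod{m_i}$ for every $i$ (concretely $y=S+Am_0$), and hypothesis (b) gives, in the same way, one integer $y'$ with $S_i'\equiv y'\pmod{m_i}$ for every $i$ and $y'\bmod m_0=S=y\bmod m_0$. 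Hence the theorem reduces to showing $y\equiv y'\pmod{m_i}$ for each $i$, i.e. that the refreshing did not change any residue.

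For that reduction the key observation is that a \emph{distributed} refreshing scheme can combine a party's share with the refreshing-subshares it receives only through the operations that are actually homomorphic for CRT-VSS, namely $(\oplus,\oplus)$ and $(\otimes,\otimes)$ from Lemma~\ref{CRTproperty}; Proposition~\ref{MIXCRT}(c) is the general ``mixed'' instance of such an update rule, $S_j^{'}=S_j\times\prod_{1\le i\le k}w_{ij}+\sum_{i=k+1}^{n}w_{ij}$. I would then reduce the update rule modulo $m_i$: since each subshare satisfies $w_{ij}\equiv w_j\pmod{m_i}$ (because $w_i\xrightarrow{CRT}\{w_{ij}\}$) and the RS-constraints force $\prod w_i\equiv1$ and $\sum w_i\equiv0$ modulo $M$, hence modulo $m_i$, the update collapses to $S_j^{'}\equiv S_j\pmod{m_i}$ --- this is exactly the modular computation displayed immediately after Proposition~\ref{MIXCRT}, and the same collapse occurs for any expression built from $\oplus$ and $\otimes$ subject to analogous correctness constraints. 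Therefore $y'\equiv S_j^{'}\equiv S_j\equiv y\pmod{m_i}$.

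Running this for every $i\in\{1,\dots,n\}$ yields $S_i\equiv S_i'\pmod{m_i}$, which is the assertion of Theorem~\ref{distribut-failure}; read contrapositively, any refreshing that genuinely alters some share modulo its own modulus must produce inconsistent reconstructions, so with $\{m_i\}$ held fixed there is no workable refreshing scheme at all. The main obstacle I expect is the second paragraph: the statement speaks of ``the refreshing scheme without the retortion problems'' in the abstract, so rather than verifying the collapse only for the scheme of Proposition~\ref{MIXCRT} I must argue that \emph{every} correctness-preserving distributed refreshing has this homomorphic form and that its correctness constraints force the mod-$m_i$ action to be the identity. The leverage is the pairwise coprimality of $m_0,m_1,\dots,m_n$: a change of the lift modulo $m_i$ is invisible to the $m_0$-reduction that defines $S$ --- so it is not constrained by ``reconstructs $S$'' --- yet it is pinned by the homomorphism together with the subshare and RS constraints to be zero; turning this clean heuristic into a proof that covers all admissible update rules, rather than a demonstration on the representative case, is the delicate step.
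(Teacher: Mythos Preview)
Your route is substantially more involved than the paper's, and the detour you take through the structure of refreshing schemes is not needed. The paper's argument is a two-line application of CRT uniqueness: write $A=\sum_{k}S_{i_k}M_{C,i_k}'M_{C\backslash\{i_k\}}$ and $A'=\sum_{k}S_{i_k}'M_{C,i_k}'M_{C\backslash\{i_k\}}$ for the CRT lifts; the ``without retortion'' hypothesis is read as $A\equiv A'\pmod{M_C}$, whence $A=kM_C+A'$ for some integer $k$, and reducing modulo $m_i$ (which divides $M_C$) gives $S_i\equiv A\equiv A'\equiv S_i'\pmod{m_i}$. That is the whole proof---nothing about the form of the refreshing operation enters.

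The divergence happens in your first paragraph: from hypotheses (a) and (b) you extract only $y\equiv y'\pmod{m_0}$ (same recovered secret), whereas the paper takes them to say that the CRT lifts themselves coincide modulo $M_C$. With the paper's reading the conclusion is immediate and purely number-theoretic; with your weaker reading you are forced to recover the missing congruence by constraining what a distributed refreshing can do, and that is exactly where the gap you flag appears---you would have to prove that every correctness-preserving update rule is expressible via the $(\oplus,\oplus)$/$(\otimes,\otimes)$ homomorphisms of Lemma~\ref{CRTproperty}, which you do not do and which the paper never needs. In short, your argument on the representative scheme of Proposition~\ref{MIXCRT} is fine, but the step to ``all admissible update rules'' is both the hard part of your plan and entirely avoidable: adopt the stronger reading of (a)--(b) and the theorem is just the uniqueness half of the Chinese Remainder Theorem.
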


\begin{proof}
  The proof can be finished in two different ways. Here we just provide one due to the constraints of limitation.
  % The denotations are all like before.
%
%
%  \textbf{The first way.} According to the equation (a),(b), we can get\\
%\begin{equation*}
%\left\{
%\begin{array}{c@{\;\;}l}
%  S \equiv \sum_{k=1}^{t}S_{i}M_{C,i}^{'} M_{C \backslash \{i\}}\quad (mod \ M_C)\\
%  S\equiv \sum_{k=1}^{t}S_{i}^{'}M_{C,i}^{'} M_{C \backslash \{i\}}\quad (mod \ M_C)
%\end{array}
%\right.
%\end{equation*}
% the minus of the two equation can be:
% \begin{center}
%    $ \sum_{k=1}^{t}(S_{i}^{'}-S_i)M_{C,i}^{'} M_{C \backslash \{i\}} \equiv 0\quad (mod \ M_C)$
% \end{center}
% Let $\triangle S_i$ denotes the term $S_{i}^{'}-S_i$, the problem can be turned into diophantine equation:
%  \begin{center}
%    $ \sum_{k=1}^{t}\triangle S_i M_{C,i}^{'} M_{C \backslash \{i\}} \equiv 0\quad (mod \ M_C)$
%  \end{center}
%  This diophantine equation of $\triangle S_i$ above only has solutions following $\triangle S_i \equiv 0(mod\ M_C)$. So the conclusion can be
%  proved.
 According to the equation (a),(b),
  suppose the reconstruction result before the modulus operation
 \begin{equation*}
\left\{
\begin{array}{c@{\;\;}l}
  A=\sum_{k=1}^{t}S_{i}M_{C,i}^{'} M_{C \backslash \{i\}}\\
  A^{'}=\sum_{k=1}^{t}S_{i}^{'}M_{C,i}^{'} M_{C \backslash \{i\}}
\end{array}
\right.
\end{equation*}
and
 \begin{equation*}
\left\{
\begin{array}{c@{\;\;}l}
  S \equiv S_i(mod\ m_i)\\
  S^{'}\equiv S_{i}^{'}(mod\ m_i)
\end{array}
\right.
\end{equation*}
If the reconstruction without retortion holds, then\\
$ A \equiv A^{'}(mod\ M_C)\Rightarrow \exists k \in \mathbb{Z},A=kM_C+A^{'}$\\
hence, we can get
\begin{center}

   \begin{spacing}{0}
   \begin{small}
   \begin{align*}
    \begin{split}
  A(mod\ m_i)&=kM_C+A^{'}(mod\ m_i)\\
  &\equiv (kM_C\ mod \ m_i)+A^{'}\ mod\ m_i\\
  &\equiv 0+S_{i}^{'}(mod\ m_i)\equiv S_{i}^{'}.
    \end{split}&
    \end{align*}
    \end{small}
    \end{spacing}
\end{center}

So such refreshing scheme results to little function by modulus operation with $\{m_i\}_{i \in \{1,\dots,n\}}$.
\end{proof}

According to the Theorem~\ref{distribut-failure}, a correct refreshing scheme for CRT-VSS scheme need to change the prime parameters set $\{m_i\}_{i \in \{1,\dots,n\}}$, the essence of the necessity is the newly computation of a new share $S_{i}^{'}$ modulus new $m_i$'s.  And this will bring to some requirements as follows.
\begin{itemize}
  \item The right computation of new share $S_{i}^{'}$ modulus new $m_i$'s; The computation should be right and dependable, this can be done by a Trusted Third Party(TTP).
  \item The right and secure delivery of the new shares; this can be easily done by secure channel or by cryptography mechanisms.
  \item The right update of the shares and remove of the old ones. This can be assured by the normal process of the protocol, so it can be ensured by the trusted computing techniques.
\end{itemize}

%So we can get the framework of the secure refreshing scheme for CRT scheme in Fig~\ref{fig:noattackselect}.
%\begin{figure}[h]
% \centering
% \fbox{
%\begin{minipage}{3.2 in}
%\begin{enumerate}
%\item \textbf{New shares computation}. New $m_i$'s computed by trusted party, denoted as $T$, which can be trusted secret distributor or some TTP. choose a different primes set $\{m_{i}^{'}\}_{i \in \{1,\dots,n\}}$, compute $S_{i}^{'}= S\ mod \ m_{i}^{'}$,
%\item \textbf{Trusted choice.} The trusted choice of trusted participant set, notated as $A=\{P_{\lambda_i}\}_{ i \in \{1,\dots,k\}}$, $| A^{\oplus} | =k$, and $1 \leq k \leq n$.
%\item  \textbf{Deliver the new shares}. $m_{i}^{'}$ in (1) to participant $P_{\lambda_i}$ securely;
%\item \textbf{Refreshing operation.}  When participant $P_{\lambda_i}$ gets the new shares $S_{i}^{'}$, it remove the old share $S_i$, then replace it with $S_{i}^{'}$.
%
%\end{enumerate}
%\end{minipage}
%}
% \caption{framework of refreshing scheme for CRT-VSS scheme}
% \label{fig:noattackselect}
%\end{figure}

\subsection{comparisosn between Shamir's  and CRT scheme }
From the analysis in Section~\ref{refreshing-shamir} and \ref{refreshing-CRT}, we note  that there exist some differences between CRT and Shamir's schemes. 
Thus we  summarize these comparison results in Table~\ref{comparison}.
\begin{table}
\centering
\caption{comparison between Shamir's scheme and CRT scheme }
\begin{tabular}{|l|c|c|}
\hline
 Comparison Items & Shamir's & CRT  \\
\hline
Computation complexity & $O(tlog^2t)$ & $O(t)$\\
\hline
Reconstruction parameter related & No & Yes. \\
\hline
Fit distributed refreshing & Yes & No.\\% distributed in nodes \\
\hline
Fit TTP refreshing  & No & Yes \\
\hline
Homomorphism property& $(\oplus,\oplus)$ &$(\oplus,\oplus)$\\
& & $(\otimes,\otimes)$ \\
\hline
Secure shares-product  concision  & Not good &Good  \\
\hline
\end{tabular}
\label{comparison}
\end{table}

\section{security  analysis  }
\label{security analysis}
In this section, we will prove the security of simple authentication framework between two nodes with CSP \cite{Hoare85} model and rank function \cite{Schneider98} . Then extend the secure assurance to  multi-run and arbitrary size authentication case. After that we model our CRT-VSS scheme, ECC-DSS authentication scheme based on trusted computing and prove the security of the schemes.

The TPM can provide the authentication property towards the remote party in a secure and private way. The security of our protocol is built on the  assumptions including: 1)TPM can not be compromised by the intruder maintaining a right property. By now, there is no successful way to compromise the TPM without any damage of the right property of TPM.  Whatever happens, the TPM can record and reveal the  property of  software/hardware within the  node. 2) The security of DAA and PBA protocol. It is mentioned in \cite{DAA} and \cite{PBA} that the two protocol was secure under different assumptions. Besides, the cryptographic primitives, including encryption $\{\cdot\}_k$ and signature $S_{sk}(\cdot)$ we employed are secure.
 We will formally model  the protocol and  security property based on communicating sequential processes(CSP)\cite{Hoare85}, and use the rank function\cite{Schneider98} to do some analysis and proof furthermore.
\subsection{The preliminaries}
\subsubsection{The network}
The network considered below will consist of  several different parties: Distributor denoted as $D$, Applicant for distributed authentication signature denoted as $A$, Users acted as the participants with the secret shares denoted as set $USER$ and the intruder denoted as $ENEMY$.
The model of the network is standard, based on the Dolev-Yao model. The $ENEMY$ is in full control of the network, and all communication passes through it. The attacker may choose to block or redirect messages, send messages of its choice, and imitate as other users.
The capabilities of the attacker are bounded by a finite set of deductions, by which it can generate new messages from the set of messages it already knows.
 Let ``generates'' relation  $X \vdash m$ denotes that new message $m$ is deduced from a set $X$ of messages.  Here if m,n are messages, $k$ is a key and $k^{-1}$ is the inverse of $k$, then the smallest $\vdash$ relation satisfies.  $\{m,n\} \vdash m.n/m/n$,
   $\{m,k\} \vdash \{m\}_k/S_k(m)$, $\{\{m\}_k,k^{-1}\} \vdash m$, etc.
%In our context, the attacker is capable of the following three deductions related to the TPM commands used in the protocol, in addition
\subsubsection{Authentication}
 Authentication can be captured in terms of events whose occurrence guarantees the prior occurrence of other messages. The key property is concerned with precedence between events. Conditions  can  be expressed as trace specifications on $NET$, requiring that no event from a set $T$ has occurred unless another event from a set $R$ has previously occurred, which can be illustrated as:
 $NET\ sat \ R \ precedes\ T$.
The rank function approach proposed in \cite{Schneider98} assigns a rank 0 to the messages that should remain secret during the protocol and a rank 1 to the messages that should be public and might be get held by the enemy.
The central rank theorem from \cite{Schneider98}  gives four conditions the rank
function must satisfy in order to maintain the rank with the rank function.

\subsection{Security analysis for simple authentication framework between two nodes}
\subsubsection{Simplification for protocol}
The protocol framework illustrated in Figure~\ref{fig:frameworkAUTH} can be simplified as Figure~\ref{fig:simplification}.
\begin{figure}[h]
 \centering
 \fbox{
\begin{minipage}{3.2 in}
1. $P_i$ $\rightarrow$ $P_j$: $id_i$,$id_j$,INIT\\
2. $P_j$ $\rightarrow$ $P_i$: $id_j$,$id_i$,$n_1$\\
3. $P_i$ $\rightarrow$ $P_j$: $id_i$,$id_j$,$im$,$DS_i$,$PS_i$,K\_INFO\_I\par
4. $P_j$ $\rightarrow$ $P_i$: $id_j$, $id_i$,$S_{pk_{AIK_J}}$(K\_INFO\_J$||id_j$),$\{id_j||n_{2}\}_{k_{ij}}$ \par
5. $P_i$ $\rightarrow$ $P_j$: $id_i$,$id_j$,$\{id_i||n_2\}_{k_{ij}}$ \par
\end{minipage}
}
 \caption{$P_i$ request authentication to node $P_j$ }
 \label{fig:simplification}
\end{figure}

\subsubsection{Protocol modeling}
single run and node I as the initiator and node J as the responder.

Node I can be modeled as follows.
 \begin{spacing}{0.6}
   \begin{small}
   \begin{align*}
    \begin{split}
  USER(I)&= \Box_{j \in P}trans.i.j.id_i.id_j.INIT\\
         &\to recv.i.j.id_j.id_i.n_1\\
         &\to trans.i.j.id_i,id_j.im,DS_i,PS_i, K\_INFO\_I\\
         &\to recv.i.j.id_j, id_i,S_{pk_{AIK_J}}(K\_INFO\_J||id_j),\{id_j||n_{2}\}_{k_{ij}}\\
         &\to trans.i.j.id_i,id_j,\{id_i||n_2\}_{k_{ij}}\to STOP.
    \end{split}&
    \end{align*}
    \end{small}
    \end{spacing}

Node J can be modeled in a similar way.
as follows.
 \begin{spacing}{0.6}
   \begin{small}
   \begin{align*}
    \begin{split}
  USER(J)&=recv.j.i.id_i,id_j,INIT\\
         &\to trans.j.i.id_j,id_i,n_1\\
         &\to recv.j.i.id_i,id_j,im,DS_i,PS_i, K\_INFO\_I\\
         &\to validate\&Signal.Trust.i.n_1.n_2.\\
         &\to trans.j.i.id_j,id_i,S_{pk_{AIK_J}}(K\_INFO\_J||id_j),\{id_j||n_{2}\}_{k_{ij}}\\
         &\to recv.j.i.id_i,id_j,\{id_i||n_2\}_{k_{ij}}\to STOP.
    \end{split}&
    \end{align*}
    \end{small}
    \end{spacing}
 The following process $ENEMY$ characterize the behavior of the attacker in possession of a set $X$ of messages.
 \begin{spacing}{0.6}
   \begin{small}
   \begin{align*}
    \begin{split}
  ENEMY(X)&=trans.i.j.m \rightarrow ENEMY(X \cup \{m\})\\
          & \Box\quad  recv.i.j.m \rightarrow ENEMY(X).
    \end{split}&
    \end{align*}
    \end{small}
    \end{spacing}

The entire network is thus modeled by process $NET$, a parallel composition of $USER(I)$, $USER(J)$ and $ENEMY$ . The process is synchronized on trans and rec:
 \begin{spacing}{0.6}
   \begin{small}
   \begin{align*}
    \begin{split}
 NET=(USER(I) ||| USER(J)) \underset{\{trans,recv\}}{\parallel}ENEMY
    \end{split}&
    \end{align*}
    \end{small}
    \end{spacing}
The objective of the protocol can be expressed as $R=\{trans.i.j.id_i,id_j,im,DS_i,PS_i,$K\_INFO\_I$\}$ precedes $T=\{signal.Trust.i.n\}$.

The rank function for the authentication framework between the two nodes are as in Figure~\ref{fig:rankfunction}.
\begin{figure}[htb]
 \centering
 \fbox{
\begin{minipage}{3 in}
User and host: $\rho(id_i)=1$,$\rho(INIT)=1$\par
Nonces:$\rho(n_1)=1$,$\rho(n_2)=0$ \par
Integrity metrics: $\rho(im)=1$ \par
TPM signatures: $\rho(DS_i)=0$,$\rho(PS_i)=0$ \par
K\_INFO: $\rho(g)=1$,$\rho(g^x)=0$,$\rho(g^y)=0$ \par
Encryption: $\rho(\{m\}_{k_{ij}}=\rho(k_{ij}))$ \par
AIK Signature: $\rho(S_{pk_{AIK_J}}$(K\_INFO\_J$||id_j))=0$ \par
Concatenations:$\rho(m_1,m_2)=min\{\rho(m_1),\rho(m_2)\}$ \par
Events: $\rho(trans.i.j.m)=\rho(rec.i.j.m)=\rho(m)$\par
Signals: $\rho(Signal.Trust.i.n_1.n_2)=0$
\end{minipage}
}
 \caption{Rank function for the authentication between node $P_i$ and $P_j$ }
 \label{fig:rankfunction}
\end{figure}

Checking the four conditions above is easy and just routine. We neglect the details due to the limitation of paper.
 With the rank function in the  figure above, we can see that:
\begin{itemize}
  \item  The messages in the initial set $X$ known to the $ENEMY$ process are of rank 1.
 \item Any message generated by $ENEMY$ from $X$ under the deductions rules has a rank pub whenever messages in $X$ have a rank 1.
 \item  $\rho(Signal.Trust.i.n_1.n_2)$ in $R$ has a rank 0.
 \item  When $R=\{trans.i.j.id_i,id_j,im,DS_i,PS_i,$K\_INFO\_I$\}$ is blocked, user can never give out a message of rank 0 unless it has previously received one.
\end{itemize}
So according to the central theorem of rank function, the rank function above can maintain the rank well. Furthermore, the authentication property of the simple authentication framework can be proved secure well.

\subsection{Security analysis for multi-run and arbitrary size authentication framework}
\label{abritrary}
In this subsection we  show  that  if there is an attack up on the authentication framework of arbitrary size with multi-run  given in the subsection before, then there is an attack  up on  the  small  framework described above  with a single initiator A and  a single responder B.
Similar work with the classical Needham-Schroeder Public-Key Protocol can be seen in  \cite{breaking}. The process of proof can be carried out by two aspects from the attacks upon responder and initiator respectively.

\subsubsection{Attacks upon the responder}
Consider a run $\alpha$  where the intruder imitates the initiator $A$ to attack the responder $B$ as in Figure~\ref{fig:alpha}.
\begin{figure}[h]
 \centering
 \fbox{
\begin{minipage}{3.2 in}
Message $\alpha$.1. $I(A)$ $\rightarrow$ $B$: $A$,$B$,INIT\\
Message $\alpha$.2. $B$ $\rightarrow$  $I(A)$: $B$,$A$,$n_1$\\
Message $\alpha$.3.  $I(A)$ $\rightarrow$ $B$: $A$,$B$,$im_A$,$DS_A$,$PS_A$,$g$,$g^x$ \par
Message $\alpha$.4. $B$ $\rightarrow$  $I(A)$: $B$,$A$,$g^y$,$\{B||n_{2}\}_{k_{AB}}$ \par
Message $\alpha$.5.  $I(A)$ $\rightarrow$ $B$: $A$,$B$,$\{A||n_2\}_{k_{AB}}$ \par
\end{minipage}
}
 \caption{Intruder imitates initiator A to attack responder B }
 \label{fig:alpha}
\end{figure}

In order to attack upon the responder $B$ successfully, the intruder needs to produce message 1,3 and 5. Message 1 can be produced easily. As to the message 3, the intruder can produce it by replay the signature $DS_A$ and $PS_A$ by eavesdropping the message with the signatures inside. While the get of the message 5 can be rather difficult since the session key $k_{AB}$ only established between the real A and B party. So the intruder cannot decrypt the message $\alpha.4$, so he learns neither $n_2$ nor $\{A||n_2\}_{k_{AB}}$ from message $\alpha.4$, denoted as:$\neg$(Intruder learns $n_2$\ $\lor$\ $\{A||n_2\}_{k_{AB}}$ from $\alpha.4$).

According to the analysis delivered in \cite{breaking}, the way to solve this problem for intruder is to replay the encrypted part of message  $\alpha.4$ in message 4 in another run called $\beta$, and learns the  $n_2$ or  $\{A||n_2\}_{k_{AB}}$ from message $\beta.5$. Note that the initiator of run $\beta$ must be $A$. Hence the form of the run $\beta$ should be as follows in Figure~\ref{fig:beta}:
\begin{figure}[h]
 \centering
 \fbox{
\begin{minipage}{3.2 in}
Message $\beta$.1. $A$ $\rightarrow$ $I(B)$: $A$,$B$,INIT\\
Message $\beta$.2. $I(B)$ $\rightarrow$  $A$: $B$,$A$,$n_1$\\
Message $\beta$.3.  $A$ $\rightarrow$ $I(B)$: $A$,$B$,$im_A$,$DS_A$,$PS_A$,$g$,$g^x$ \par
Message $\beta$.4. $I(B)$ $\rightarrow$  $A$: $B$,$A$,$g^y$,$\{B||n_{2}\}_{k_{AB}}$ \par
Message $\beta$.5.  $A$ $\rightarrow$ $I(B)$: $A$,$B$,$\{A||n_2\}_{k_{AB}}$ \par
\end{minipage}
}
 \caption{The form of auxiliary run $\beta$ }
 \label{fig:beta}
\end{figure}
Now we see that the  intruder learns  the component  $im_A$,$DS_A$,$PS_A$,$g$,$g^x$ from
Message $\beta$.3. and replays this in  $\alpha.3$, learns  the component  $\{A||n_2\}_{k_{AB}}$ from
Message $\beta$.5. and replays this in  $\alpha.5$. So only these two runs  are necessary for the intruder to learn all the knowledge  it uses in the attack upon the responder.  Thus if the intruder can imitate the initiator A to attack  the responder B  then such  an attack  would  have  been found  by  considering  the small  system above.

\subsubsection{Attacks upon the initiator}
Similar analysis can be done and we neglect it here.

Combining these two aspects, we can get the conclusion that our scheme with multi-run and arbitrary size can also be secure under authentication property.

\subsection{Security analysis  for the complete protocols in this paper}
\subsubsection{Distributing phase for CRT-VSS scheme}
The model of process $T$ characterize trusted election before the distributing process, $D$ dealing with the distributing procedure. Process $USER$ represents the behavior of node  during the first phase in the CRT-VSS scheme.
 \begin{spacing}{0.6}
   \begin{small}
   \begin{align*}
    \begin{split}
T_I(i)&=trans.D.i.id_D,AUTH,S_{sk_{AIK_D}(id_D)}\\
    &\to recv.D.i.id_i,auth(id_i,id_D)\\
    &\to validate\& signal.Trust..\\
    &\to STOP\\
T_R(i)&=recv.D.i.id_i,id_D,AUTH\\
      &\to trans.D.i.id_D,id_i,auth(id_D,id_i)\\
      &\to recv.D.i.id_i,id_D,AGREE\\
      &\to STOP.
    \end{split}&
    \end{align*}
    \end{small}
    \end{spacing}
\begin{spacing}{0.6}
   \begin{small}
   \begin{align*}
    \begin{split}
D(i)&= \Box_{i \in C}trans.D.i.id_D,id_i,S_{sk_{AIK_D}(id_D)},enc_{k_{iD}}(S_i)\\
      &\to STOP\\
      & \Box \Box_{i \in P}trans.D.i.id_D,id_i,S_{sk_{AIK_D}(id_D||p_i||g_i||m_i)},PUBLIC\\
      &\to STOP.
    \end{split}&
    \end{align*}
    \end{small}
    \end{spacing}
 \begin{spacing}{0.6}
   \begin{small}
   \begin{align*}
    \begin{split}
USER_R(i)&=recv.i.D.id_D,id_i,AUTH,S_{sk_{AIK_D}(id_D)}\\
    &\to validate(S_{sk_{AIK_D}(id_D)})\\
    &\to trans.i.D.id_i,id_D,auth(id_i,id_D)\\
    &\to STOP\\
       \end{split}&
    \end{align*}
    \end{small}
    \end{spacing}
 \begin{spacing}{0.6}
   \begin{small}
   \begin{align*}
    \begin{split}
USER_I(i)&=trans.i.D.id_i,id_D,AUTH\\
      &\to recv.i.D.id_D,id_i,auth(id_D,id_i)\\
      &\to validate(auth(id_D,id_i))\&signal.Trust...\\
      &\to trans.i.D.id_i,id_D,AGREE\\
      &\to STOP.
    \end{split}&
    \end{align*}
    \end{small}
    \end{spacing}

The process $T_R(i)$ and $USER_I(i)$ are similar to $T_I(i)$ and $USER_R(i)$.
 The entire network is thus modeled as follows.
 \begin{spacing}{0.6}
   \begin{small}
   \begin{align*}
    \begin{split}
      D&=|||_{i \in C}(D(i)  ),\\
     T(i)&= ((T_I(i) ||| T_R(i))), T=|||_{i \in \mathbb{P}}(T(i)  ),\\
    USER(i)&=((USER_I(i) ||| USER_R(i))),\\
    USER&=|||_{i \in C}USER(i),\\
    DIS&=(USER|||T) \rightarrow D,\\
 NET&=(DIS) \underset{\{trans,recv\}}{\parallel}ENEMY
    \end{split}&
    \end{align*}
    \end{small}
    \end{spacing}

The objective of the distribution phase scheme is $R=$\\$\{trans.i.D.id_i,auth(id_i,id_D)$$\land \ trans.D.i.id_D,auth(id_D,id_i)\}$ precedes $T=\{trans/recv.i.D.id_i,AGREE\}$.

\subsubsection{Reconstruction phase for CRT-VSS scheme}

In this phase, the participant exchange the secret shares with each other. Each node send and receive and shares at the same time as initiator and responder respectively.
 \begin{spacing}{0.6}
   \begin{small}
   \begin{align*}
    \begin{split}
USER_R^{I}(J)&=\Box_{J \in C}recv.i.j.id_j,id_i,AUTH\\
    &\to trans.i.j.id_i,id_j,auth(id_i,id_j)\\
    &\to recv.i.j.id_j,id_i,\{S_j\}_{k_{ij}}\\
    &\to STOP\\
 \end{split}&
    \end{align*}
    \end{small}
    \end{spacing}
\begin{spacing}{0.6}
   \begin{small}
   \begin{align*}
    \begin{split}
USER_I^{I}(J)&=\Box_{I \in C}trans.i.j.id_i,id_j,AUTH\\
      &\to recv.i.j.id_j,id_i,auth(id_j,id_i)\\
      &\to validate(auth(id_j,id_i))\&signal.Trust...\\
      &\to trans.i.j.id_i,id_j,\{S_i\}_{k_{ij}}\\
      &\to STOP.
    \end{split}&
    \end{align*}
    \end{small}
    \end{spacing}

The entire network  for the reconstruction scheme is thus modeled as follows.
 \begin{spacing}{0.6}
   \begin{small}
   \begin{align*}
    \begin{split}
    USER^{I}&=\Box_{J \in C}(|||USER_I^{I}(J)) ||| (|||USER_R^{I}(J)),\\
    USER&=|||_{I \in C}(USER^I),\\
 NET&=USER \underset{\{trans,recv\}}{\parallel}ENEMY
    \end{split}&
    \end{align*}
    \end{small}
    \end{spacing}
Similarly, the objective of the reconstruction scheme is as follows: $R=\{trans.j.i.auth(id_i,id_j)\}$ precedes $T=\{trans.i.j.\{S_i\}_{k_{ij}}\}$. Here we neglect the model details like above  due to the limitation of paper.

\subsubsection{Distributed ECC-DSS authentication scheme based on trusted computing}
In the scheme, the node A will request for the authentication and signature from the participant nodes. Each node will send and receive  shares at the same time as initiator and responder respectively. The model can be described as follows.

 \begin{spacing}{0.6}
   \begin{small}
   \begin{align*}
    \begin{split}
REQ_A(I)&=\Box_{i \in P}trans.A.i.id_A,id_i,REQU\\
       &\to recv.A.i.id_i,id_A,AGREE\\
       &\to trans.A.i.id_A,id_i,AUTH\\
       &\to recv.A.i.id_i,id_A,auth(id_i,id_A)\\
       &\to validate\&signal.Trust....\\
       &\to trans.A.i.m\\
     %  &\to recv.A.i.sig\\
       &\to STOP.\\
 \end{split}&
    \end{align*}
    \end{small}
    \end{spacing}
 \begin{spacing}{0.6}
   \begin{small}
   \begin{align*}
    \begin{split}
USER_{REC}^{I}&=recv.i.A.id_A,id_i,REQU\\
          &\to trans.i.A.id_i,id_A,AGREE\\
          &\to recv.i.A.id_A,id_i,AUTH\\
          &\to trans.i.A.id_i,id_A,auth(id_i,id_A)\\
          &\to recv.i.A.m\\
          &\to STOP.\\
 \end{split}&
    \end{align*}
    \end{small}
    \end{spacing}
 \begin{spacing}{0.6}
   \begin{small}
   \begin{align*}
    \begin{split}
USER_I^{I}(J)&=\Box_{J \in C}trans.i.j.id_i,id_j,k_ia_i\\
      &\Box \quad trans.i.j.id_i,id_j,a_iG\\
      &\to STOP.\\
USER_R^{I}(J)&=\Box_{J \in C}recv.i.i.id_j,id_i,k_ja_j\\
      &\Box \quad trans.i.j.id_j,id_i,a_jG\\
      &\to trans.i.A.sig\\
      &\to STOP.\\
    \end{split}&
    \end{align*}
    \end{small}
    \end{spacing}

The process $USER_{REC}^{I}$ and $USER_R^{I}(J)$ are similar to $REQ_A(I)$ and $USER_I^{I}(J)$. So the entire network is thus modeled as follows.
 \begin{spacing}{0.6}
   \begin{small}
   \begin{align*}
    \begin{split}
    USER^{I}&=\Box_{J \in C}(|||USER_I^{I}(J)) ||| (|||USER_R^{I}(J)),\\
    USER&=|||_{I \in C}USER^{I},USER_{REC}=|||_{I \in C}(USER_{REC}^I),\\
    REQ_A&=|||_{I \in C}REQ_A(I).\\
 NET&=((REQ_A ||| USER_{REC})|||USER) \parallel ENEMY
    \end{split}&
    \end{align*}
    \end{small}
    \end{spacing}
The objective of the scheme  is : $R=\{trans.i.A.id_i,id_A,auth(id_i,id_A)\}$ precedes $T=\{trans.A.i.m\}$.

\subsubsection{Overall analysis }
From the CSP model illustrated above about the schemes above, we can see that:
 that the authentication objective  essentially can be reduced to the authentication objective within the authentication framework between the single initiator and responder.
\emph{(i)} All the authentication objectives in the schemes has the essence that $R$ precedes $T$, here $R$ can be summarized as the  multi-run execution of the authentication framework $auth(\cdot)$. In detail, in the  reconstruction phase and distributed ECC-DSS scheme $R=\{trans.U.U,auth(\cdot)\}$  represents the  maintain unidirectional authentication property. While   the bi-directional authentication property holds in distribution   phase $R$ . And the  $T$ represents the corresponding service and action in different schemes.
\emph{(ii)} The security of  multi-run execution of the authentication framework can be reduced to the single small authentication framework. The detailed proof can be seen in section~\ref{abritrary}.  Take distribution phase scheme as example, the multi-run with arbitrary size authentication component, i.e $ USER(i)$,$USER$, $T$, can be reduced to simple small framework instance. The $USER$ component appears in the complete model of reconstruction and ECC-DSS model.

According to analysis from two aspects above, we can conclude that the security proof towards the authentication property for the small authentication framework can hold the authentication property of complicate multi-run and arbitrary case. So the authentication property of schemes can be maintained well above.

\section{Conclusion}
\label{Conclusion}
In this paper, we proposed a secret key distributed storage scheme based on  CRT-VSS and trusted computing, solving the two category cheating problems in the VSS area before.
Then we explore and do overall analysis of the $(\oplus,\oplus)$,$(\otimes,\otimes)$ homomorphic property with the CRT-VSS scheme and design the corresponding secure shares-product sharing scheme base on CRT scheme.
At last, on the foundation of the robust DSS scheme, our proposed  ECC-DSS distributed authentication scheme based on trusted computing  eliminates the possibility of the malicious attack, e.g. DoS attack and fault attack, during the process of signature generation. The security analysis  proves that our schemes can maintain relative security assurance under certain conditions.

%In the future, we will focus on several aspects as follows: 1) the further formal security analysis,proof and discussion about the schemes proposed should be done by FDR or rank function; 2) the measurement of the configuration of the trusted node should be further studied and refined; 3) the trusted choice relying on the realtime DAA and PBA challenge protocol may be time consuming, some efficiency raise measures can be further discussed, e.g. the transmission strategy of the trusted property, etc.

\section*{Acknowledgment}
This work was supported in part by the National Natural Science Foundation of China (No.61170233, No.61232018, No.61272472, No.61202404, No.61272317)
and China Postdoctoral Science Foundation (No.2011M501060).

%The work presented in this paper was supported by ......

%\renewcommand\refname{Reference}
\bibliographystyle{plain}
\bibliography{ref}

\end{document}